\newtheoremstyle{mylemma}
{0.5\topsep}
{0.5\topsep}
{\normalfont}
{}
{\itshape}             
{: }
{ }
{}
\theoremstyle{mylemma}
\newtheorem{lemma}{Lemma}
\newtheoremstyle{observation}
{3pt}
{3pt}
{\itshape}
{0.5em}
{\bfseries\itshape}
{: }
{ }
{}
\theoremstyle{observation}
\newtheorem{observation}{\textbf{\textit{}}}
\newtheoremstyle{remarkstyle}
{3pt}
{3pt}
{\normalfont}
{}
{\bfseries\itshape}
{:}
{.5em}
{\bfseries\itshape\thmname{#1}\thmnumber{ #2}}
\theoremstyle{remarkstyle}
\newtheorem{remark}{Remark}
\renewenvironment{proof}[1][\proofname]{\par
	\pushQED{\qed}%
	\normalfont \topsep6\p@\@plus6\p@\relax
	\trivlist
	\item[\hskip\labelsep
	\itshape
	#1:]\ignorespaces  
}{%
	\popQED\endtrivlist\@endpefalse
}
\begin{document}
\title{Specific Multi-emitter Identification: Theoretical Limits and Low-complexity Design}

\author{Yuhao Chen\orcidlink{0009-0000-5050-1688},
	Boxiang He\orcidlink{0000-0002-9235-1144},
	Junshan Luo\orcidlink{0000-0003-4151-5126},
	Shilian Wang\orcidlink{0000-0003-4132-8750},
	Lei Yao \orcidlink{0009-0005-3172-1462},
	and Jing Lei\orcidlink{0000-0002-5838-5826}\thanks{A preliminary version of this work has been submitted to the IEEE International Conference on Acoustics, Speech and Signal Processing (ICASSP 2026), Paper ID 16896, and is currently under review \cite{Chen2025SpecificMultiEmitterIdentification}. This journal submission significantly extends the preliminary version with additional theoretical results, experiments, and discussions.}
	\thanks{Y. Chen, B. He, J. Luo, S. Wang, L. Yao, and J. Lei are with the College of Electronic Science and Technology, National University of Defense Technology, Changsha 410003, P. R. China. (email: cyh20220720@163.com; boxianghe1@bjtu.edu.cn; luojunshan10@nudt.edu.cn; wangsl@nudt.edu.cn; yaolei11103@163.com; leijing@nudt.edu.cn).}
	}
\markboth{}
{Shell \MakeLowercase{\textit{et al.}}: A Sample Article Using IEEEtran.cls for IEEE Journals}
\maketitle

\begin{abstract}
Specific emitter identification (SEI) distinguishes emitters by utilizing hardware-induced signal imperfections. However, conventional SEI techniques are primarily designed for single-emitter scenarios. This poses a fundamental limitation in distributed wireless networks, where simultaneous transmissions from multiple emitters result in overlapping signals that conventional single-emitter identification methods cannot effectively handle. To overcome this limitation, we present a specific multi-emitter identification (SMEI) framework via multi-label learning, treating identification as a problem of directly decoding emitter states from overlapping signals. Theoretically, we establish performance bounds using Fano's inequality. Methodologically, the multi-label formulation reduces output dimensionality from exponential to linear scale, thereby substantially decreasing computational complexity. Additionally, we propose an improved SMEI (I-SMEI), which incorporates multi-head attention to effectively capture features in correlated signal combinations. Experimental results demonstrate that SMEI achieves high identification accuracy with a linear computational complexity. Furthermore, the proposed I-SMEI scheme significantly improves identification accuracy across various overlapping scenarios compared to the proposed SMEI and other advanced methods.
\end{abstract}

\begin{IEEEkeywords}
Distributed emitters, message passing, multi-label learning, overlapping signal, specific multi-emitter identification
\end{IEEEkeywords}

\section{Introduction}
\label{sec1}
\IEEEPARstart{S}{pecific} emitter identification (SEI) exploits hardware-level imperfections to uniquely identify individual transmitters~\cite{Yi2024SpecificEmitterIdentification}. Due to variations in electrical components between different devices, the signals emitted by different radios can vary measurably. By analyzing received signals and extracting these features, it is possible to effectively distinguish between different emitters~\cite{Chen2019RFFingerprint}. Leveraging these advantages, SEI serves as a versatile physical-layer security mechanism applicable to various scenarios, including cognitive radio authentication~\cite{Liu2024CognitiveJammers}, intrusion prevention in high-speed mobile systems~\cite{Ma2025OrthogonalDelayDoppler}, intelligent attacker identification~\cite{He2025SecureSemanticTransmission}, and secure anti-jamming in low-altitude wireless networks~\cite{Guo2025DualEndFluidAntenna}.

The objective of SEI is to employ classifiers for identification after extracting the radio frequency (RF) fingerprint from these signals \cite{AlShawabka2020Exposing},  \cite{Alhoraibi2023PhysicalLayerAuthentication}, \cite{Zhao2022ComplexValuedLearning}. Recent SEI research integrates deep learning into identification frameworks \cite{Wu2023, Gao2025}, enabling direct feature extraction and classification from raw signals. To improve identification accuracy, a dual-channel feature-extraction network, for instance, combines variational mode decomposition with time-frequency and temporal features using convolutional neural networks (CNNs). It also incorporates an attention module and a dual-modal feature fusion strategy \cite{Su2023SpecificEmitterIdentification}. Moreover, approaches using dynamic neural networks, recurrent neural networks (RNNs), and CNNs for classification, and generative adversarial networks for feature extraction, achieve accuracies of 81.6\%, 94.6\%, and 97.7\%, respectively \cite{Roy2020RFAL}. Integrating classical characteristics such as bispectra, cyclic spectra, and power spectra into a CNN, with intrinsic features derived from raw signals via RNNs, can improve identification accuracy to 92.25\% \cite{Ying2022ChannelAttention}. To further enhance learning performance, current research investigates preprocessing methods that convert signals into visual representations, in addition to directly feeding raw signals into deep learning models. For example, drone signal characteristics are extracted using the short-time Fourier transform (STFT) and subsequently classified using a CNN \cite{Yu2024DroneRFa}. By extending this method, dual-stream networks achieve over 95\% classification accuracy at signal-to-noise ratios (SNRs) between 10 dB and 20 dB by combining physics-informed impulse decomposition with local STFT attention to effectively model hardware impairments, such as in-phase and quadrature (I/Q) imbalances \cite{Zhao2025PhysicsInformed}. Likewise, robust emitter identification is possible when time-frequency distribution maps generated by the STFT are fed into CNNs \cite{Wang2017RadarEmitterRecognition}. Additionally, the Hilbert-Huang transform is applied to received signals, yielding grayscale images that deep residual networks can process for identification and learning, yielding positive outcomes \cite{Pan2019SpecificEmitterIdentification}. 

Although deep learning and time-frequency analysis approaches significantly improve SEI performance, most SEI research focuses on single-emitter scenarios that fail to address the core requirements of modern wireless systems with distributed network architectures. In practical applications such as industrial, scientific, and medical bands, where simultaneous transmissions are essential, multiple devices inherently transmit concurrently \cite{Forenbacher2021,Polak2022}. These emitters cause received signals to be mixtures of various sources, presenting challenges that conventional single-emitter identification methods cannot adequately address. Frequently, they transmit in tightly spaced time intervals or occupy neighboring sub-bands, which causes received signals to be overlapping \cite{Zeng2019UAV5G,Baldini2019Impact}. For instance, the frequent occurrence of multiple signals occupying the same frequency band in drone-assisted radio monitoring systems leads to overlapping spectrum signals, making signal identification more difficult \cite{Hao2025}. The extension of SEI to specific multi-emitter identification (SMEI), which seeks to identify the complete set of active distributed emitters from overlapping signals, is motivated by this practical issue. Fig.~\ref{SMEIPROCESS1} summarizes the fundamental framework of SMEI. When the SMEI system receives signals with time-frequency overlap, it first extracts multi-emitter RF fingerprints, performs classification, and then uses a threshold-based decision procedure to identify the set of active distributed emitters.

One approach to SMEI is to structure the problem as multiclass classification over all non-empty subsets of $K$ devices, which produces an output space of $2^K-1$ classes \cite{Sankhe2019ORACLE}. However, this exponential growth in $K$ poses significant challenges for inference and training, including a sharp increase in model parameters due to the different potential combinations \cite{Chen2021RFMixtureDDL}. Furthermore, it is challenging to extract reliable multi-emitter signatures because receiver noise or channel effects can quickly disguise the delicate RF fingerprint characteristics that distinguish devices \cite{Jin2023RFFeatureFusion}. These challenges are exacerbated in low-SNR scenarios, where both feature-based and deep learning techniques often exhibit severe performance declines. In distributed wireless environments with signal overlap and noise, traditional single-emitter identification and multiclass frameworks are therefore insufficient. Specifically, effective identification is hampered by the combined problems of feature extraction constraints and an explosion of the output space. Inspired by these shortcomings and building upon our preliminary multi-label learning framework \cite{Chen2025SpecificMultiEmitterIdentification}, this research investigates novel approaches for SMEI that aim to enhance feature disentanglement and robustness.
\begin{figure*}[htbp]
	\centering
	\includegraphics[width=0.7\textwidth]{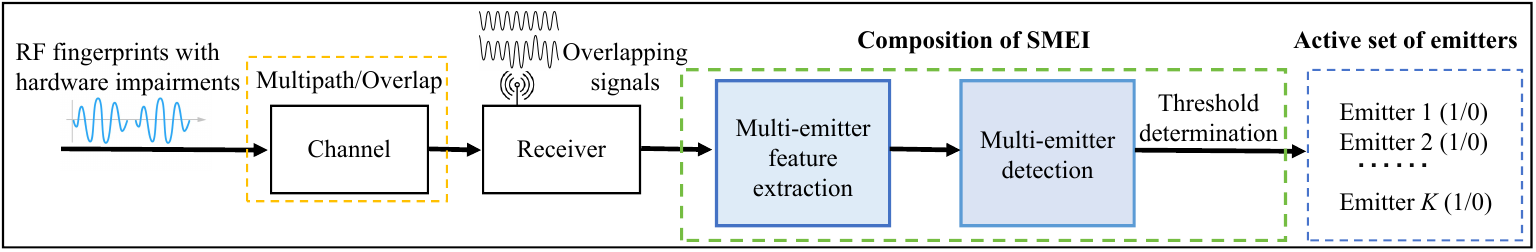}
	\caption{The RF fingerprint-derived multi-emitter identification system framework comprises a comprehensive processing pipeline. This pipeline includes multi-emitter feature extraction, detection, and threshold determination.}
	\label{SMEIPROCESS1}
\end{figure*}
\begin{figure*}[htbp]
	\centering
	\includegraphics[width=0.7\textwidth]{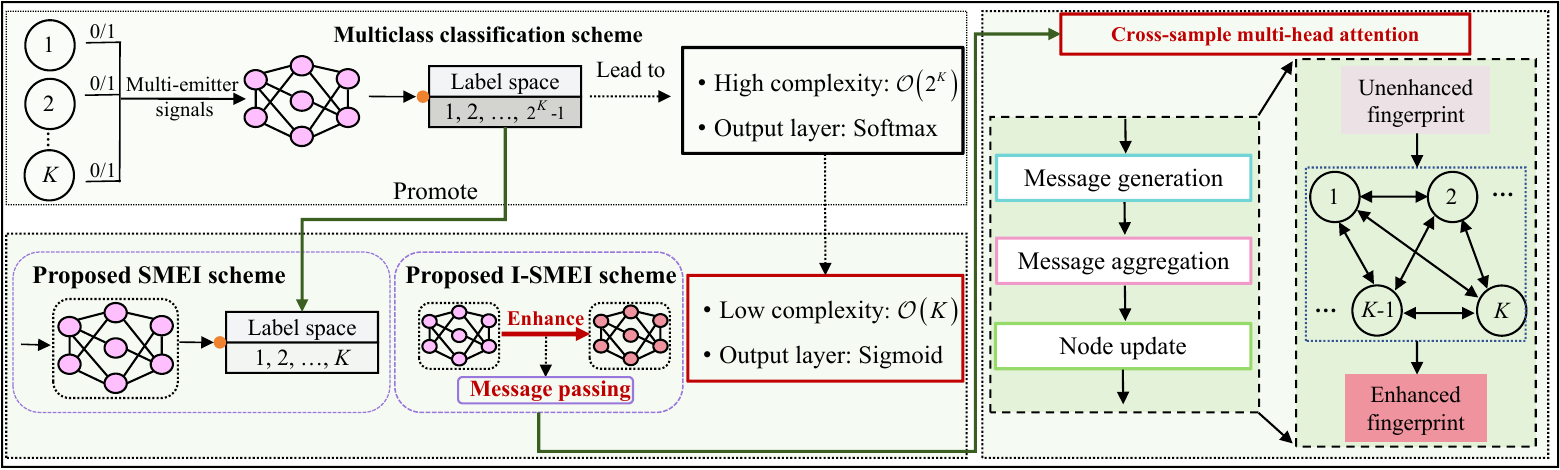}
	\caption{The proposed SMEI framework addresses the high complexity of multiclassification by reducing the label space dimensionality and enhancing fingerprint extraction through message passing.}
	\label{Technical_approach_3}
\end{figure*}

To overcome the aforementioned complexity issues in emitter identification for distributed networks, we introduce a low-complexity multi-emitter identification approach based on multi-label learning, as depicted in Fig.~\ref{Technical_approach_3}. The key contributions of this paper are outlined below:

\begin{itemize}
	\item We first analyze the theoretical performance limits of the SMEI problem from an information-theoretic perspective. By deriving upper bounds for both subset and Hamming accuracies using Fano's inequality, we establish a theoretical benchmark for evaluating model performance.
	
	\item The SMEI problem is addressed through a multi-label classification framework that treats each emitter as an independent label and enables direct prediction of the active device set. This design reduces the output dimension from the exponential scale of combinatorial classes $\left(2^K-1\right)$ to a linear scale $\left(K\right)$, significantly improving system scalability while mitigating model parameter explosion.

	\item To further enhance the model's ability to distinguish overlapping fingerprint features in overlapping signals, we propose an improved SMEI (I-SMEI), which incorporates a message-passing mechanism via multi-head attention. This mechanism enables information interaction at the feature level and allows the model to dynamically capture the interdependencies among different emitter features during the mixing process. This enhances feature disentanglement for overlapping signals.
	
	\item Experimental results show that the proposed SMEI approach achieves identification accuracy comparable to that of traditional multiclass classification methods while requiring significantly less complexity. It also outperforms STFT-based  identification methods. Additionally, I-SMEI outperforms the Query2Label (Q2L) enhancement approach in terms of robustness and identification accuracy across a wide range of channel scenarios.
\end{itemize}

This paper continues as follows: The system model is introduced in Section \ref{sec2}. Section \ref{sec3} derives the theoretical upper bounds for identification accuracy. The core design of the multi-label learning architecture is presented in Section \ref{sec4}, and the proposed I-SMEI scheme is detailed in Section \ref{sec5}. Subsequently, Section \ref{sec6} is devoted to the experimental setup and results. The paper concludes in Section \ref{sec7}.

\emph{Notation:} Throughout this paper, scalars, vectors, and matrices are denoted by lower-case italic letters $x$, bold lower-case italic letters $\bm{x}$, and bold capital italic letters $\bm{X}$, respectively. A random variable and its realization are respectively written as $\mathsf{x}$ and $x$. The operators $[\cdot]^{\mathsf{T}}$ and $[\cdot]^{\dagger}$ denote the transpose and conjugate transpose, respectively. The symbol $\odot$ represents the Hadamard (element-wise) product, while $\log(\cdot)$ denotes the logarithm. The notation $\mathcal{CN}\left({\mu},\varOmega^2\right)$ denotes the probability density function of a random variable following the complex Gaussian distribution with mean $\mu$ and variance $\varOmega^2$. $|\cdot|$ denotes the total number of distinct categories or types in a given set. $H(\cdot)$ and $I(\cdot;\cdot)$ represent the entropy function and mutual information function, respectively. $\mathcal{I}(\cdot)$ is the indicator function that returns one if the condition is true and zero otherwise. $g^{-1}(\cdot)$ denotes the inverse function of $g(\cdot)$. $\min(\cdot,\cdot)$ represents the minimum value function. Let $j=\sqrt{-1}$. $\mathcal{O}(\cdot)$ denotes the complexity order, and $\mathbb{E}\{\cdot\}$ denotes the expectation operator with respect to all random variables.

\section{System model}
\label{sec2}

As illustrated in Fig.~\ref{fig:TRANS_RECIVE}, we consider a distributed wireless system comprising up to \(K\) potential emitters. The received signal, characterized by time-frequency overlap from concurrent emitter transmissions, which is denoted by
\begin{equation}
	y(t) \;=\; \sum_{m\in S} h_m \, d_m\!\left(x_m(t)\right) \;+\; w(t),
	\label{eq:rx_overview}
\end{equation}
where \(S\) is the  set of active devices; \(x_m(t)\) denotes the original transmitted signal of the $m$-th distributed emitter; \(d_m(\cdot)\) is the distortion function of the $m$-th emitter; \(h_m\) denotes the channel coefficient; \(w(t)\) is Gaussian noise.

Following the literature \cite{He2020CooperativeSEI}, we model the distortion function \(d_m(\cdot)\) using I/Q imbalance, spurious tone, carrier leakage, and power amplifier (PA) nonlinearity. Specifically,  the distorted signal with I/Q imbalance is expressed as
\begin{equation}
	x_m'(t)=\; \mu_m\, x_m(t) \;+\; \nu_m\, x_m^{*}(t),
	\label{eq:iq_model_new}
\end{equation}
where $\mu_m$ and $\nu_m$ are the parameters used to describe the distortion of the modulator, which can be represented as
\begin{align}
	\mu_m &= \tfrac{1}{2}\left(G_m+1\right)\cos\left(\tfrac{\zeta_m}{2}\right) + j\,\tfrac{1}{2}\left(G_m-1\right)\sin\left(\tfrac{\zeta_m}{2}\right), \label{eq:mu_m} \\
	\nu_m &= \tfrac{1}{2}\left(G_m-1\right)\cos\left(\tfrac{\zeta_m}{2}\right) + j\,\tfrac{1}{2}\left(G_m+1\right)\sin\left(\tfrac{\zeta_m}{2}\right), \label{eq:nu_m}
\end{align}
where $\zeta_m$ is the phase bias and $G_m$ denotes the gain imbalance. With the spurious tone and carrier leakage,  the distorted signal is further formulated by 
\begin{equation}
	x_m''(t) \;=\; \left(x_m'(t) + \xi_m\right) e^{j \left(2\pi f t\right)}\;+\; a_m^{\mathrm{ST}}\, e^{j\left(2\pi \left(f+f_m^{\mathrm{ST}}\right)t\right)},
	\label{eq:if_spur_cl_new}
\end{equation}
where \( a_m^{\mathrm{ST}} \) and \( f_m^{\mathrm{ST}} \) are the amplitude and the frequency of the spurious tone; \( \xi_m e^{j 2 \pi f t} \) is the carrier leakage.

Finally, the signal $x_m''(t)$ is fed into PA with the nonlinear distortion, which can be expressed as
\begin{align}
	x_m'''(t) &= \sum_{l=1}^{L} b_{m,l}\, \left( x_m''(t) \right)^{\,l}, \label{eq:pa_taylor_first} \\
	&= d_m\left(x_m(t)\right), \label{eq:pa_taylor_second}
\end{align}
where \( L \) is the order and \( b_{m,l} \) denotes the coefficient of the Taylor polynomial. The core objective of SMEI is to estimate the joint posterior probability of activation states for distributed emitters, which can be expressed as
\begin{equation}
	\hat{\mathbb{P}}\left(\boldsymbol{\lambda}|\bm{y}\right) = \mathcal{D}\left(\bm{y}\right),
	\label{eq:posterior_distribution}
\end{equation}
where
\begin{align}
	\bm{\lambda} &= \begin{bmatrix} \lambda_1, \lambda_2, \ldots, \lambda_K \end{bmatrix}^{\mathsf{T}}, 
	\label{eq:multi_label_vector} \\
	\bm{y} &= \left[y(1), y(2), \ldots, y(T)\right]^{\mathsf{T}}, 
	\label{eq:vector_y}
\end{align}
where \( \mathcal{D} \) represents the multi-emitter identifier to be designed, aiming to infer the activation states of these distributed emitters; the multi-label vector \( \boldsymbol{\lambda} \) is the emitter activation states, with \( \lambda_m \) being a binary variable that indicates the status of the \( m \)-th emitter (active if \( \lambda_m = 1 \) and inactive if \( \lambda_m = 0 \)); the vector \( \bm{y} \) represents the overlapping signal vector, where \( T \) denotes the length of the signal vector \( \bm{y} \).
\begin{remark}
	The core capability of SMEI lies in its ability to directly identify multiple concurrent emitters in distributed systems by leveraging their RF fingerprints. This technique enables the accurate determination of the set of active emitters from overlapping signals. In distributed systems, the uplink typically relies on simultaneous transmissions from multiple emitters, including sensor nodes. By deploying SMEI at the base station, this concurrent transmission characteristic can be fully utilized to authenticate the legitimate identities of distributed nodes participating in the computation, thereby enhancing the overall network's security.
\end{remark}

\begin{figure}[t]
	\centering
	\includegraphics[width=0.39\textwidth]{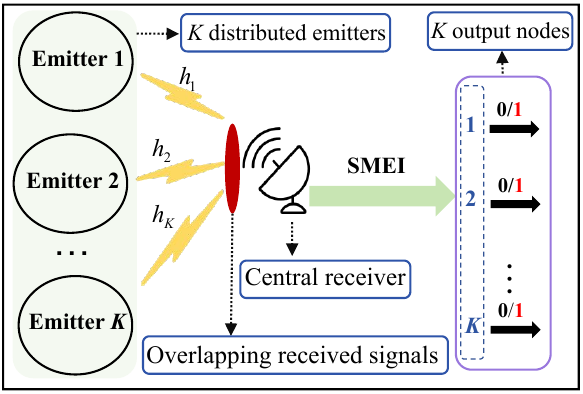}
	\caption{System model of SMEI with distributed emitters.}
	\label{fig:TRANS_RECIVE}
\end{figure}
\section{Theoretical Upper Bound of Identification Accuracy for SMEI}
\label{sec3}
In this section, we investigate the theoretical bounds for the SMEI problem. Specifically, we first present the performance metrics for SMEI. The theoretical limits of SMEI are then derived using Fano's inequality. Additionally, we estimate mutual information using mutual information neural estimation (MINE), which is used in the derivation of theoretical limits. These bounds provide fundamental performance guidelines for distributed emitter identification systems.

\subsection{Upper Bounds for SMEI}

We first define the metrics of subset accuracy and Hamming accuracy as
\begin{align}
	P_{\text{subset}} &= \frac{1}{N} \sum_{n=1}^{N} \mathcal{I}\left(\boldsymbol{\hat{\lambda}}^{(n)} = \boldsymbol{\lambda}^{(n)}\right), \\
	P_{\text{hamming}} &= \frac{1}{K} \sum_{k=1}^{K} \frac{1}{N} \sum_{n=1}^{N} \mathcal{I}\left(\hat{\lambda}_k^{(n)} = \lambda_k^{(n)}\right),
\end{align}
where $\bm{\hat{\lambda}}^{(n)}$ is the predicted label vector for the $n$-th sample; $\bm{\lambda}^{(n)}$ denotes the ground truth label vector for the $n$-th sample; $\hat{\lambda}_k^{(n)}$ is the predicted label for the $k$-th emitter of the $n$-th sample; $\lambda_k^{(n)}$ denotes the ground truth label for the $k$-th emitter of the $n$-th sample. Here, the subset accuracy is a strict metric that requires all labels to be predicted correctly for a sample to be counted as correct, while the Hamming accuracy is a more lenient metric that calculates the average accuracy across all individual label positions and is complementary to the Hamming loss \cite{DeSilva2025EVTC,Xu2024PerformanceEvaluation}.

The information-theoretic bounds for classification error rate are given in detail using Fano's inequality \cite{Brown2009InformationTheoretic,Zhou2010MultiInformation}. Using the information-theoretic framework, we rigorously derive theoretical upper bounds on identification accuracy for SMEI. The main challenges are handling the exponentially growing label-combination space when extending Fano's inequality from single-label to multi-label scenarios, and establishing the relationship between subset accuracy and Hamming accuracy in terms of conditional entropy. The principal results are formalized in the \textit{Lemma} \ref{lemma01}.
\begin{lemma}
	\label{lemma01}
	The theoretical upper bound of subset accuracy $P_{\text{subset}}$ and Hamming accuracy $P_{\text{hamming}}$ for SMEI are respectively given by
	\begin{align}
		\widetilde{P}_{\text{subset}} &= 
		\begin{cases}
			\frac{1}{|\bm{\varLambda}|}, & C \leq g(1 /|\bm{\varLambda}|) \\
			\min \left(1, g^{-1}(C)\right), & C > g(1 /|\bm{\varLambda}|) 
		\end{cases} \label{eq:subset_star_probability}, \\
		\widetilde{P}_{\text{hamming}} &= \left(\widetilde{P}_{\text{subset}}\right)^{\frac{1}{K}} \label{eq:hamming_star_probability},
	\end{align}
	where
	\begin{align}
		\bm{\varLambda} &= \left[\boldsymbol{\lambda}^{(1)}, \boldsymbol{\lambda}^{(2)}, \ldots, \boldsymbol{\lambda}^{(N)}\right], \\
		C &= \log\left(|\bm{\varLambda}| - 1\right) - H\left(\boldsymbol{\lambda}\right) + I\left(\boldsymbol{\lambda}; \boldsymbol{y}\right),\\
		g\left(C\right) &=  \log \left(|\bm{\varLambda}|-1\right) C+C \log \left(C\right) \\
		& \quad +\left(1-C\right) \log \left(1-C\right),
	\end{align}
	where $|\bm{\varLambda}| = 2^{K} - 1$ is the number of distinct label combinations in $\bm{\varLambda}$, excluding the empty transmitter case, and $\boldsymbol{y}$ denotes the received signal vector consisting of multiple samples following the model in \eqref{eq:rx_overview}. 
\end{lemma}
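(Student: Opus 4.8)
The plan is to treat the multi-label target $\boldsymbol{\lambda}$ as a single discrete random variable ranging over the $|\bm{\varLambda}| = 2^{K}-1$ admissible (non-empty) activation patterns, so that the identifier $\mathcal{D}$ induces the Markov chain $\boldsymbol{\lambda} \to \boldsymbol{y} \to \hat{\boldsymbol{\lambda}}$ and the subset error probability is $P_e = 1 - P_{\text{subset}}$. First I would invoke Fano's inequality in the form $H(P_e) + P_e \log(|\bm{\varLambda}|-1) \ge H(\boldsymbol{\lambda}|\boldsymbol{y})$ and rewrite the conditional entropy as $H(\boldsymbol{\lambda}|\boldsymbol{y}) = H(\boldsymbol{\lambda}) - I(\boldsymbol{\lambda};\boldsymbol{y})$, which is exactly the quantity appearing inside $C$.

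Next, using the symmetry $H(P_e) = H(1 - P_{\text{subset}}) = H(P_{\text{subset}})$ of the binary entropy together with $P_e = 1 - P_{\text{subset}}$, I would rearrange Fano's inequality to isolate a single monotone object. A short manipulation turns it into $P_{\text{subset}}\log(|\bm{\varLambda}|-1) - H(P_{\text{subset}}) \le \log(|\bm{\varLambda}|-1) - H(\boldsymbol{\lambda}|\boldsymbol{y})$; recognizing the left-hand side as $g(P_{\text{subset}})$ and the right-hand side as $C$, this is precisely the constraint $g(P_{\text{subset}}) \le C$.

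The crux is then inverting this constraint. Differentiating gives $g'(p) = \log\!\left[(|\bm{\varLambda}|-1)\,p/(1-p)\right]$, so $g$ is strictly decreasing on $[0, 1/|\bm{\varLambda}|)$ and strictly increasing on $(1/|\bm{\varLambda}|, 1]$, attaining its minimum at $p = 1/|\bm{\varLambda}|$. Since we seek the largest admissible accuracy, I would invert $g$ along the increasing branch: when $C > g(1/|\bm{\varLambda}|)$ the feasible set $\{p : g(p) \le C\}$ has upper endpoint $g^{-1}(C)$, capped at one, yielding $\widetilde{P}_{\text{subset}} = \min(1, g^{-1}(C))$; when $C \le g(1/|\bm{\varLambda}|)$ the bound collapses to the random-guessing baseline $1/|\bm{\varLambda}|$, reproducing the case split in \eqref{eq:subset_star_probability}. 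The non-monotonicity of $g$, and the argument that the upper crossing point rather than the lower one governs the accuracy bound, is the most delicate part of this step.

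Finally, for the Hamming bound I would exploit the statistical symmetry of the emitter model: treating the per-label correctness events as identically distributed and conditionally independent across the $K$ emitters yields $P_{\text{subset}} = P_{\text{hamming}}^{K}$, hence $P_{\text{hamming}} = P_{\text{subset}}^{1/K}$. Applying the established bound $P_{\text{subset}} \le \widetilde{P}_{\text{subset}}$ and the monotonicity of $x \mapsto x^{1/K}$ then gives $\widetilde{P}_{\text{hamming}} = (\widetilde{P}_{\text{subset}})^{1/K}$. I expect the main obstacle to be justifying this factorization rigorously, since without further assumptions only $P_{\text{subset}} \le \min_k \Pr(\hat{\lambda}_k = \lambda_k) \le P_{\text{hamming}}$ holds in general; the clean power-law relation requires the independence and symmetry hypotheses to be stated explicitly.
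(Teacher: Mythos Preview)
Your proposal is correct and follows essentially the same route as the paper's own proof: apply Fano's inequality to the $|\bm{\varLambda}|$-ary classification problem, substitute $P_e = 1 - P_{\text{subset}}$ and $H(\boldsymbol{\lambda}|\boldsymbol{y}) = H(\boldsymbol{\lambda}) - I(\boldsymbol{\lambda};\boldsymbol{y})$ to obtain $g(P_{\text{subset}}) \le C$, analyze the shape of $g$ (the paper computes $g''(p)>0$ to get strict convexity, you equivalently read off the sign changes of $g'$) with its unique minimum at $p=1/|\bm{\varLambda}|$, and then invert along the increasing branch. For the Hamming bound the paper makes exactly the independence/equal-accuracy assumptions you flag as necessary, so your caveat that $P_{\text{subset}} = P_{\text{hamming}}^{K}$ requires those hypotheses is well placed rather than a divergence from the paper.
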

\textit{Proof:} See Appendix~\ref{appendix:fano_derivation}. \hfill $\square$
\subsection{Mutual Information Estimation via MINE}
\label{sec:mine}
The computation of the theoretical bounds $\widetilde{P}_{\text{subset}}$ and $\widetilde{P}_{\text{hamming}}$ in \textit{Lemma}~\ref{lemma01} requires estimating the mutual information $I\left(\bm{\lambda}; \bm{y}\right)$. To this end, we adopt the MINE framework proposed in~\cite{Belghazi2018MINE}, which efficiently estimates the mutual information between high-dimensional feature vectors and discrete label vectors by training a neural network estimator.

Practically, since $\bm{y}$ is high-dimensional, directly estimating $I\left(\bm{\lambda}; \bm{y}\right)$ is computationally infeasible. We therefore adopt a two-stage approach to estimate $I\left(\bm{\lambda}; \bm{y}\right)$.\footnote{For each SNR condition, the mutual information is estimated as a single scalar value, which is obtained through statistical aggregation over all test samples and reflects the overall dependency strength between the received signal and the multi-label vector.} In the first stage, we employ a deep residual convolutional neural network that has been trained on the training set to extract RF fingerprint feature vectors $\bm{x} \in \mathbb{R}^{q}$ from overlapping signal $\bm{y}$, where $q$ denotes the feature dimension. These extracted features $\bm{x}$ preserve key information for identifying multi-emitter combinations while significantly reducing data dimensionality. In the second stage, we utilize the MINE method to estimate the mutual information $I\left(\bm{\lambda}; \bm{x}\right)$ between these extracted features $\bm{x}$ and $\bm{\lambda}$, thereby indirectly obtaining an estimate of $I\left(\bm{\lambda}; \bm{y}\right)$.

Specifically, the MINE estimator learns a nonlinear function, which is represented by 
\begin{equation}
	T_{\bm{\theta}_{\text{MINE}}}: \mathbb{R}^{q} \times \{0,1\}^{K} \to \mathbb{R},
\end{equation}
where $\bm{\theta}_{\text{MINE}}$ represents the learnable parameters of the MINE network. For a sample pair $\left(\bm{x}, \bm{\lambda}\right)$, the function $T_{\bm{\theta}_{\text{MINE}}}$ outputs a real-valued scalar that quantifies the statistical dependence between them. The lower bound of mutual information is given by \cite{Belghazi2018MINE}
\begin{equation}
	\begin{aligned}
		I\left(\bm{\lambda}; \bm{x}\right) &\geq \sup_{\bm{\theta}_{\text{MINE}}} \mathbb{E}_{p\left(\bm{\lambda}, \bm{x}\right)} T_{\bm{\theta}_{\text{MINE}}}\left(\bm{\lambda}, \bm{x}\right) \\
		& \quad - \log \mathbb{E}_{p\left(\bm{\lambda}\right)p\left(\bm{x}\right)} \exp\left(T_{\bm{\theta}_{\text{MINE}}}\left(\bm{\lambda}, \bm{x}\right)\right).
	\end{aligned}
\end{equation}

During the training of the MINE network, we estimate the above expectations using the Monte Carlo method. For a batch of $N$ samples, the expectation under the joint distribution is estimated as
\begin{equation}
	\hat{\mathbb{E}}_{p\left(\bm{\lambda}, \bm{x}\right)} T_{\bm{\theta}_{\text{MINE}}}\left(\bm{\lambda}, \bm{x}\right) = \frac{1}{N} \sum_{n=1}^{N} T_{\bm{\theta}_{\text{MINE}}}\left(\bm{\lambda}^{(n)}, \bm{x}^{(n)}\right),
\end{equation}
and the expectation under the marginal distribution is computed by randomly shuffling the labels as
\begin{equation}
	\begin{aligned}
		&\hat{\mathbb{E}}_{p(\bm{\lambda})p(\bm{x})} \exp\left(T_{\bm{\theta}_{\text{MINE}}}\left(\bm{\lambda}, \bm{x}\right)\right) \\
		&= \frac{1}{N} \sum_{n=1}^{N} \exp\left(T_{\bm{\theta}_{\text{MINE}}}\left(\tilde{\bm{\lambda}}^{(n)}, \bm{x}^{(n)}\right)\right),
	\end{aligned}
\end{equation}
where $\tilde{\bm{\lambda}}$ denotes the shuffled label vector. Once the training is completed, we use the trained network to estimate mutual information between the $\bm{x}$ and $\bm{\lambda}$. By substituting the estimated mutual information into the bounds derived in \textit{Lemma}~\ref{lemma01}, we obtain quantitative theoretical upper bounds on the subset accuracy and Hamming accuracy for SMEI, which will be validated against experimental results in Section~\ref{sec6}.
\section{Multi-Label Learning Framework for SMEI}
\label{sec4}
In this section, we propose a SMEI approach for distributed emitters, which uses a multi-label learning framework to simultaneously identify multiple active emitters from overlapping signals. As illustrated in Fig.~\ref{Proposed_SMEI_nomp}, the proposed SMEI comprises two key components. The first is a multi-emitter fingerprint extractor $\xi\left(\cdot\right)$, which takes the preprocessed overlapping signal $\tilde{\bm{y}}$ as input and maps it to a $K$-dimensional fingerprint vector $\bm{\eta}$. The second is a multi-emitter activation detector $\sigma_\mathrm{s}\left(\cdot\right)$, which converts $\bm{\eta}$ into activation probabilities for identifying the active emitter set $\hat{S}^{*}$. In other words, in our SMEI approach, the joint posterior probability is estimated as
\begin{equation}
	\hat{\mathbb{P}}\left(\boldsymbol{\lambda}|\bm{y}\right) = \sigma_s\left(\xi\left(\tilde{\bm{y}}; \boldsymbol{\theta}_{\xi}\right)\right).
\end{equation}
\subsection{Multi-Emitter Fingerprint Extractor}
\label{sec4.1}
In the fingerprint extractor, we first preprocess $\bm{y}$ through energy normalization and global standardization, with real and imaginary parts as dual-channel input, yielding the preprocessed signal $\tilde{\bm{y}}$. Then, the multi-emitter fingerprint extractor $\xi\left(\cdot\right)$ transforms the preprocessed 
signal $\tilde{\bm{y}}$ into the multi-emitter fingerprint $\bm{\eta}$, 
which is a $K$-dimensional vector encoding hardware-induced characteristics of all potential emitters. This transformation is mathematically expressed by
\begin{equation}
	\label{eq:feature_extraction}
	\bm{\eta} = \xi\left(\tilde{\bm{y}}; \boldsymbol{\theta_{\xi}}\right),
\end{equation}
where $\boldsymbol{\theta_{\xi}}$ denotes the learnable parameters of the fingerprint extractor, 
and the element $\eta_m$ represents the activation score encoding the $m$-th emitter's fingerprint. 
\begin{figure*}[t]
	\centering
	\includegraphics[width=0.76\textwidth]{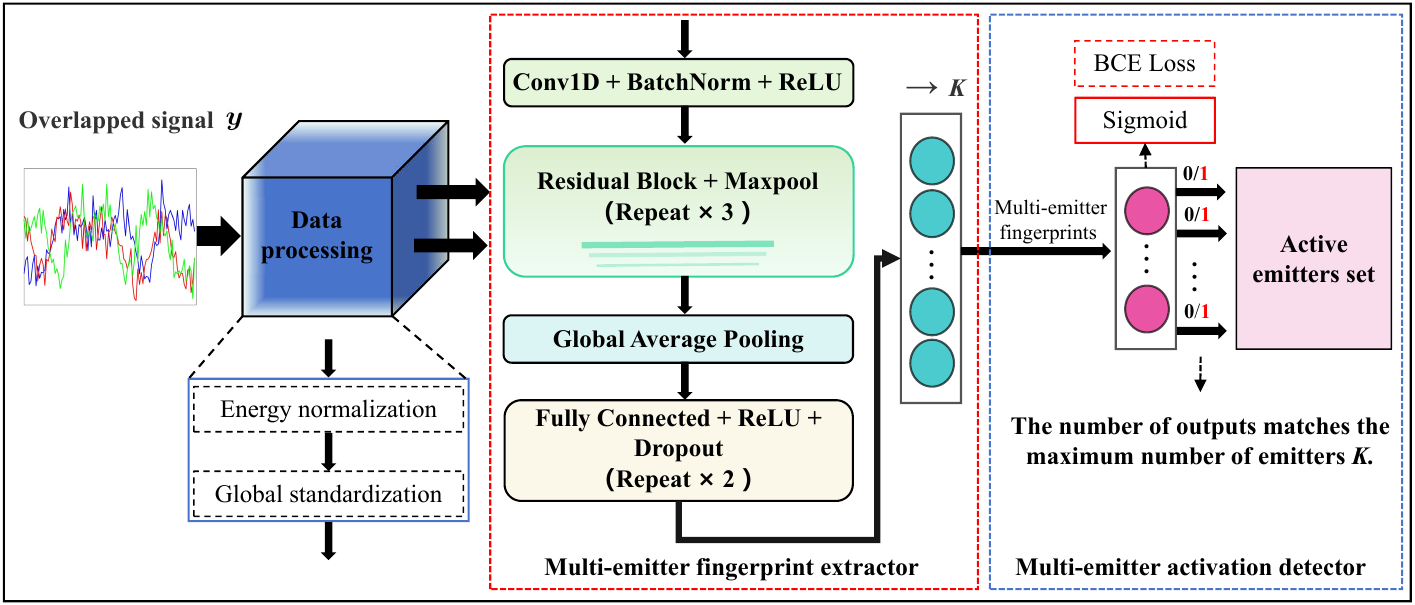}
	\caption{The network architecture of the proposed SMEI scheme which integrates a fingerprint extractor and an activation detector via multi-label learning.}
	\label{Proposed_SMEI_nomp}
\end{figure*}
\begin{algorithm}[t]
	\caption{Multi-emitter Fingerprint Extractor and Activation Detector for Solving Problem \eqref{eq:optimization}}
	\label{alg:smei_inference}
	
	\textbf{Input:} $\bm{y}$, $\boldsymbol{\theta_{\xi}}^*$, and $\tau$. \\
	\textbf{Output:} Active emitter set $\hat{S}^{*}$. \\
	1: Preprocess $\bm{y}$ through energy normalization and global standardization to obtain $\tilde{\bm{y}}$; \\
	2: Extract multi-emitter fingerprints using \eqref{eq:feature_extraction}; \\
	3: Compute identification probabilities using \eqref{eq:sigmoid_activation}; \\
	4: Determine active emitters using \eqref{eq:S_out}; \\
	5: \textbf{Return} $\hat{S}^{*}$.
\end{algorithm}
The function $\xi\left(\cdot\right)$ is implemented using a deep residual convolutional architecture~\cite{Sankhe2019ORACLE}, as detailed in Fig.~\ref{Proposed_SMEI_nomp}. Specifically, it consists of an initial convolutional layer, cascaded residual modules, global adaptive average pooling, fully connected layers, and a final linear classification layer. Learnable parameters $\boldsymbol{\theta_{\xi}}$ represent all weights and biases across these components. This network structure serves as one implementation example, and our framework does not depend on any specific architecture. Moreover, the multi-label learning framework is architecture-agnostic and can accommodate various backbone networks for different SMEI applications.

\subsection{Multi-emitter Activation Detector}
The activation detector converts the extracted multi-emitter fingerprint $\bm{\eta}$ 
into activation probabilities by applying a sigmoid function $\sigma_s\left(\cdot\right)$. Specifically, the probability of the $m$-th emitter is computed by
\begin{equation}
	\label{eq:sigmoid_activation}
	\begin{aligned}
		p_m &= \sigma_\mathrm{s}\left(\eta_m\right), \\
		&= \frac{1}{1 + \exp\left(-\eta_m\right)}.
	\end{aligned}
\end{equation}

Unlike softmax-based multiclass schemes~\cite{Sankhe2019ORACLE}, $\sigma_\mathrm{s}\left(\cdot\right)$ enables independent probability estimation for each emitter's fingerprint, 
allowing multiple concurrent activations. To learn the optimal parameters $\boldsymbol{\theta}_{\xi}^{*}$ for the extractor, we minimize the binary cross-entropy (BCE) loss \cite{Wu2024Transformer} between the network's predictions and the ground truth multi-label vectors $\boldsymbol{\lambda}$. Thus, the optimization problem is formulated as
\begin{equation}
	\label{eq:optimization}
	\boldsymbol{\theta}_{\xi}^{*} = \underset{\boldsymbol{\theta}_{\xi}}{\arg\min} \,\mathcal{L}_{\,\text{BCE}}\left(\bm{p}_{\xi}, \boldsymbol{\lambda}\right),
\end{equation}
where $\bm{p}_{\xi} = \sigma_s\left(\xi\left(\tilde{\bm{y}}; \boldsymbol{\theta}_{\xi}\right)\right)$ denotes the predicted activation probabilities, and the BCE loss\protect\footnotemark is defined as
\begin{equation}
	\label{eq:bce_loss}
	\!\!\!\mathcal{L}_{\,\text{BCE}} \!= -\frac{1}{K}\!\!\sum_{m=1}^{K} \left(\lambda_m \log p_m \!+\! \left(1-\lambda_m\right) \log\left(1-p_m\right) \right),
\end{equation}
where $p_m$ denotes the predicted activation probability of the $m$-th emitter.
\footnotetext{In implementation, the BCE loss is computed directly from logits $\bm{\eta}$ using the log-sum-exp trick to ensure numerical stability \cite{paszke2019pytorch}.}

Finally, the multi-emitter activation decision is obtained by applying a thresholding criterion, which can be expressed by
\begin{equation}\label{eq:S_out}
	\hat{S}^{*} = \left\{ m \mid p_m^{*} > \tau, m \in \left[1, K\right] \right\},
\end{equation}
where $\tau \in \left(0,1\right)$ denotes the decision threshold. The complete identification procedure is summarized in Algorithm~\ref{alg:smei_inference}.

\begin{remark}
We propose an SMEI using multi-label learning to identify overlapping RF signals, representing a paradigm shift from traditional single-label multiclass classification methods. The multi-label approach enables simultaneous identification of multiple active emitters without requiring explicit signal separation or prior knowledge of the number of active emitters. Moreover, it employs independent binary classification for each potential emitter, naturally aligning with the practical scenario in which distributed emitters operate independently and may be active simultaneously.
\end{remark}

\section{Message-passing Enhancement for SMEI}
\label{sec5}
In this section, we propose the I-SMEI approach for distributed emitters, which enhances the SMEI framework by leveraging message-passing mechanisms to better exploit inter-emitter correlations in overlapping signals. We first establish the theoretical motivation by demonstrating that significant correlations exist between signals sharing common emitters. Building on this motivation, we subsequently detail the implementation of message passing via multi-head attention, which enables the network to effectively capture and leverage these inter-emitter dependencies at the feature level to improve fingerprint disentanglement.
\subsection{Correlation-driven Probabilistic Reasoning}
In distributed emitters scenarios, the multi-emitter fingerprints are intertwined in the received signals, leading to complex correlations among emitters. Leveraging these correlations can reduce identification uncertainty. We formalize this observation in the following \textit{Lemma} \ref{lemma02}.
\begin{lemma}
	\label{lemma02}
	For any two label vectors $\boldsymbol{\lambda}^{\left(i\right)}, \boldsymbol{\lambda}^{\left(j\right)} \in \left\{0,1\right\}^K$, we define their corresponding received signals as $\boldsymbol{y}^{\left(i\right)}$ and $\boldsymbol{y}^{\left(j\right)}$, respectively. If the intersection of the label vectors is non-empty, i.e. $\boldsymbol{\lambda}^{\left(i\right)} \odot \boldsymbol{\lambda}^{\left(j\right)} \neq \boldsymbol{0}$, then there exists a non-zero correlation between the corresponding received signals, i.e.\footnote{The conditional mutual information is defined as $I(X; Y \mid Z) = H(X \mid Z) - H(X \mid Y, Z)$~\cite{cover2006elements}.}
	\begin{equation}
		I\left(\boldsymbol{y}^{\left(i\right)}; \boldsymbol{y}^{\left(j\right)} \mid \boldsymbol{\lambda}^{\left(i\right)}, \boldsymbol{\lambda}^{\left(j\right)}\right) > 0.
	\end{equation}
\end{lemma}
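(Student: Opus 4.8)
The plan is to trace the correlation back to a single physical cause: the hardware fingerprint of an emitter that is active in both transmissions. Since $\boldsymbol{\lambda}^{(i)} \odot \boldsymbol{\lambda}^{(j)} \neq \boldsymbol{0}$, there exists at least one index $m^{*}$ with $\lambda^{(i)}_{m^{*}} = \lambda^{(j)}_{m^{*}} = 1$. I adopt the Bayesian viewpoint implicit in the fingerprint model of \eqref{eq:iq_model_new}--\eqref{eq:pa_taylor_second}: the time-invariant per-emitter distortion parameters $\boldsymbol{\psi}_m = \left(G_m, \zeta_m, \xi_m, a_m^{\mathrm{ST}}, f_m^{\mathrm{ST}}, \{b_{m,l}\}\right)$ are random, drawn once per device and held fixed across its transmissions. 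The activation labels reveal only which emitters are present, not the realizations of their fingerprints; hence the shared block $\boldsymbol{\Theta} = \{\boldsymbol{\psi}_m : \lambda^{(i)}_m = \lambda^{(j)}_m = 1\}$, collecting the fingerprints of all commonly active emitters, remains random after conditioning on $(\boldsymbol{\lambda}^{(i)}, \boldsymbol{\lambda}^{(j)})$, and it is precisely this latent block that couples the two receptions.

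First I would establish the conditional-independence backbone. Conditioned on $\boldsymbol{\Theta}$ and the two label vectors, the signals $\boldsymbol{y}^{(i)}$ and $\boldsymbol{y}^{(j)}$ are generated from disjoint, mutually independent sources of randomness: distinct transmitted data blocks, independent receiver noise $\boldsymbol{w}^{(i)} \perp \boldsymbol{w}^{(j)}$, and the independently drawn fingerprints of the non-shared emitters. Therefore
\begin{equation}
	I\!\left(\boldsymbol{y}^{(i)}; \boldsymbol{y}^{(j)} \mid \boldsymbol{\Theta}, \boldsymbol{\lambda}^{(i)}, \boldsymbol{\lambda}^{(j)}\right) = 0 .
\end{equation}
Applying the chain rule to $I\!\left(\boldsymbol{y}^{(i)}; (\boldsymbol{y}^{(j)}, \boldsymbol{\Theta}) \mid \boldsymbol{\lambda}^{(i)}, \boldsymbol{\lambda}^{(j)}\right)$ in two orders and cancelling the vanishing term yields the key identity
\begin{equation}
	\begin{aligned}
		I\!\left(\boldsymbol{y}^{(i)}; \boldsymbol{y}^{(j)} \mid \boldsymbol{\lambda}^{(i)}, \boldsymbol{\lambda}^{(j)}\right) = \; & I\!\left(\boldsymbol{y}^{(i)}; \boldsymbol{\Theta} \mid \boldsymbol{\lambda}^{(i)}, \boldsymbol{\lambda}^{(j)}\right) \\
		& - I\!\left(\boldsymbol{y}^{(i)}; \boldsymbol{\Theta} \mid \boldsymbol{y}^{(j)}, \boldsymbol{\lambda}^{(i)}, \boldsymbol{\lambda}^{(j)}\right),
	\end{aligned}
\end{equation}
so the target quantity is exactly the information about the shared fingerprint carried by $\boldsymbol{y}^{(i)}$ that is explained away once $\boldsymbol{y}^{(j)}$ is observed.

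To prove strict positivity it is cleaner to rule out conditional independence directly. Marginalising the conditionally independent factorisation over the latent gives the mixture
\begin{equation}
	p\!\left(\boldsymbol{y}^{(i)}, \boldsymbol{y}^{(j)} \mid \boldsymbol{\lambda}^{(i)}, \boldsymbol{\lambda}^{(j)}\right) = \mathbb{E}_{\boldsymbol{\Theta}}\!\left[\, p\!\left(\boldsymbol{y}^{(i)} \mid \boldsymbol{\Theta}\right) p\!\left(\boldsymbol{y}^{(j)} \mid \boldsymbol{\Theta}\right) \right],
\end{equation}
with the label conditioning kept implicit. If $\boldsymbol{y}^{(i)}$ and $\boldsymbol{y}^{(j)}$ were conditionally independent given the labels, then for every pair of bounded test functions $u, v$ the statistics $U(\boldsymbol{\Theta}) = \mathbb{E}[u(\boldsymbol{y}^{(i)}) \mid \boldsymbol{\Theta}]$ and $V(\boldsymbol{\Theta}) = \mathbb{E}[v(\boldsymbol{y}^{(j)}) \mid \boldsymbol{\Theta}]$ would satisfy $\mathrm{Cov}(U, V) = 0$. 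I would then exhibit a single pair for which this fails: because the distortion map depends non-trivially on $\boldsymbol{\psi}_{m^{*}}$ (e.g. $\mu_{m^{*}}, \nu_{m^{*}}$ in \eqref{eq:mu_m}--\eqref{eq:nu_m} vary with $(G_{m^{*}}, \zeta_{m^{*}})$, and the spurious tone in \eqref{eq:if_spur_cl_new} scales with $a_{m^{*}}^{\mathrm{ST}}$), a matched statistic that responds to emitter $m^{*}$'s signature makes both $U$ and $V$ monotone in the same scalar component of $\boldsymbol{\psi}_{m^{*}}$, forcing $\mathrm{Cov}(U, V) > 0$. This contradiction yields $I\!\left(\boldsymbol{y}^{(i)}; \boldsymbol{y}^{(j)} \mid \boldsymbol{\lambda}^{(i)}, \boldsymbol{\lambda}^{(j)}\right) > 0$.

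The main obstacle is exactly this strict inequality, since non-negativity is automatic. The delicate point is that a common latent does not by itself guarantee dependence: if $\boldsymbol{y}^{(i)}$ and $\boldsymbol{y}^{(j)}$ happened to probe disjoint coordinates of $\boldsymbol{\Theta}$, the explaining-away term in the key identity could cancel the first term. What rescues the argument is that both receptions contain the \emph{same} emitter $m^{*}$ and therefore depend on the \emph{same} fingerprint block $\boldsymbol{\psi}_{m^{*}}$, so the coupling cannot be routed through independent coordinates. I would discharge this step under a mild non-degeneracy assumption, namely that the fingerprint prior has positive variance and the distortion $d_{m^{*}}(\cdot)$ is not almost surely constant in $\boldsymbol{\psi}_{m^{*}}$, which is precisely the premise that makes SEI feasible in the first place.
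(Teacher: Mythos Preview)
Your proposal is correct and rests on the same physical insight as the paper --- the shared RF-fingerprint parameters $\{\boldsymbol{\theta}_k\}_{k\in\mathcal{C}}$ of the commonly active emitters are what couples $\boldsymbol{y}^{(i)}$ and $\boldsymbol{y}^{(j)}$ --- but your execution is considerably more careful than the paper's own argument. The paper simply writes out the decomposition $\boldsymbol{y}^{(i)}=\sum_{k\in\mathcal{A}_i}\boldsymbol{s}_k(\boldsymbol{\theta}_k)+\boldsymbol{w}_i$, observes that both signals contain the common block $\sum_{k\in\mathcal{C}}\boldsymbol{s}_k(\boldsymbol{\theta}_k)$, and then invokes the monotonicity of conditional entropy, $H(X\mid Y,Z)\le H(X\mid Z)$ with equality iff $X\perp Y\mid Z$, asserting that the strict inequality holds ``due to the shared signal components.'' That is, the paper treats the crucial non-conditional-independence step as self-evident and stops there. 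You, by contrast, make the latent structure explicit: you isolate $\boldsymbol{\Theta}$, establish $I(\boldsymbol{y}^{(i)};\boldsymbol{y}^{(j)}\mid\boldsymbol{\Theta},\boldsymbol{\lambda}^{(i)},\boldsymbol{\lambda}^{(j)})=0$, derive the chain-rule identity expressing the target as an ``explained-away'' information gap, and then supply a concrete covariance witness to force strict positivity. Your route buys an actual proof of the strict inequality (modulo the stated non-degeneracy of the fingerprint prior, which the paper also needs but leaves implicit), whereas the paper's version reads more as a plausibility argument. The trade-off is length: the paper's proof is four lines; yours is a full page of machinery for what is, at the level of this applications paper, an intuitive motivating lemma.
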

\begin{proof}
	According to the system model (Section~\ref{sec2}, Eq.\eqref{eq:rx_overview}), the received signal follows a linear superposition model as
	\begin{equation}
		\boldsymbol{y} = \sum_{k=1}^{K} \lambda_k \boldsymbol{s}_k\left(\boldsymbol{\theta}_k\right) + \boldsymbol{w},
	\end{equation}
	where $\boldsymbol{s}_k\left(\boldsymbol{\theta}_k\right)$ is the signal component of the $k$-th emitter carrying RF fingerprint parameter $\boldsymbol{\theta}_k$, and $\boldsymbol{w}$ is the additive noise. For the received signals corresponding to label vectors $\boldsymbol{\lambda}^{\left(i\right)}$ and $\boldsymbol{\lambda}^{\left(j\right)}$, they can be decomposed by
\begin{align}
	\boldsymbol{y}^{\left(i\right)} &= \sum_{k \in \mathcal{A}_i} \boldsymbol{s}_k\left(\boldsymbol{\theta}_k\right) + \boldsymbol{w}_i,\\
	\boldsymbol{y}^{\left(j\right)} &= \sum_{k \in \mathcal{A}_j} \boldsymbol{s}_k\left(\boldsymbol{\theta}_k\right) + \boldsymbol{w}_j,
	\label{eq:second}
\end{align}
where $\mathcal{A}_i = \left\{k: \lambda_k^{\left(i\right)}=1\right\}$ and $\mathcal{A}_j = \left\{k: \lambda_k^{\left(j\right)}=1\right\}$ are the index sets of active transmitters corresponding to the two label vectors, respectively. If $\mathcal{A}_i \cap \mathcal{A}_j \neq \emptyset$, denote the shared transmitter set as $\mathcal{C} = \mathcal{A}_i \cap \mathcal{A}_j$, then both received signals contain the common signal component $\sum_{k \in \mathcal{C}} \boldsymbol{s}_k\left(\boldsymbol{\theta}_k\right)$. The conditional mutual information is given as
\begin{equation}
	\begin{aligned}
		I\left(\boldsymbol{y}^{(i)}; \boldsymbol{y}^{(j)} \!\mid \! \boldsymbol{\lambda}^{(i)}, \boldsymbol{\lambda}^{(j)}\right) 
		&\!=\! H\left(\!\boldsymbol{y}^{(i)} \mid \boldsymbol{\lambda}^{(i)}\!\right) \\
		&\quad \!\!- H\left(\!\boldsymbol{y}^{(i)} \mid \boldsymbol{y}^{(j)}, \boldsymbol{\lambda}^{(i)}, \boldsymbol{\lambda}^{(j)}\!\right).
	\end{aligned}
	\label{eq:mutual_information}
\end{equation}

The shared signal allows $\boldsymbol{y}^{\left(j\right)}$ to provide information about the shared emitter fingerprint parameters $\left\{\boldsymbol{\theta}_k\right\}_{k \in \mathcal{C}}$, thereby reducing the conditional uncertainty of $\boldsymbol{y}^{\left(i\right)}$. According to the monotonicity theorem of conditional entropy \cite{cover2006elements}, $H(X|Y,Z) \leq H(X|Z)$ with equality if and only if $X$ and $Y$ are conditionally independent given $Z$. Since $\boldsymbol{y}^{\left(i\right)}$ and $\boldsymbol{y}^{\left(j\right)}$ are not conditionally independent given $\boldsymbol{\lambda}^{\left(i\right)}$ and $\boldsymbol{\lambda}^{\left(j\right)}$ due to the shared signal components, i.e.
\begin{align}
	H\left(\boldsymbol{y}^{\left(i\right)} | \boldsymbol{y}^{\left(j\right)}, \boldsymbol{\lambda}^{\left(i\right)}, \boldsymbol{\lambda}^{\left(j\right)}\right) & < H\left(\boldsymbol{y}^{\left(i\right)} | \boldsymbol{\lambda}^{\left(i\right)}\right).
	\label{fig:monotonicity}
\end{align}

Then we substitute~\eqref{fig:monotonicity} into the definition of conditional mutual information in~\eqref{eq:mutual_information}, i.e.  
\begin{align}
	I\left(\boldsymbol{y}^{\left(i\right)}; \boldsymbol{y}^{\left(j\right)} | \boldsymbol{\lambda}^{\left(i\right)}, \boldsymbol{\lambda}^{\left(j\right)}\right) & > 0.
\end{align}

This completes proof of \textit{Lemma}~\ref{lemma02}.
\end{proof}

\begin{figure}[t]
	\centering
	\includegraphics[width=0.83\columnwidth]{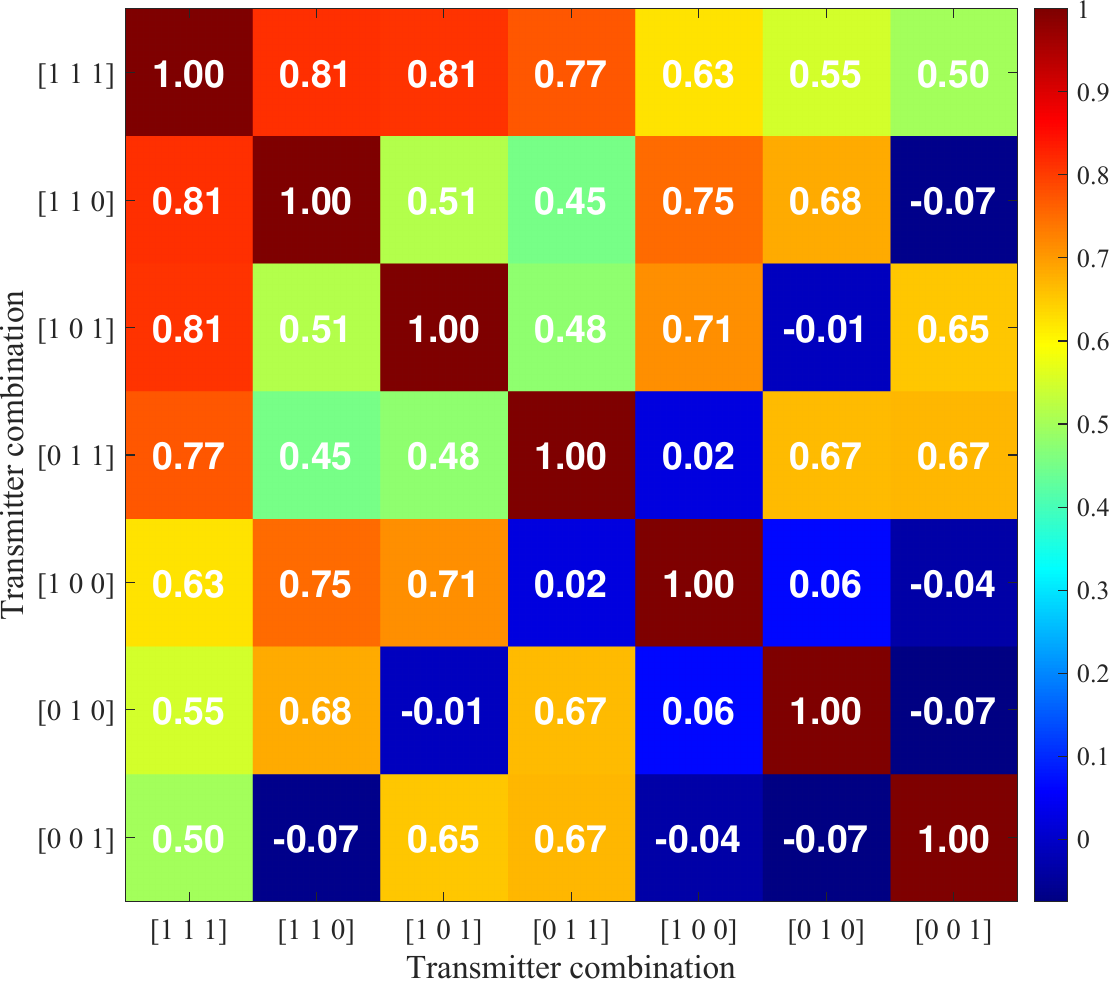}
	\caption{
		The Pearson correlation\protect\footnotemark matrix of the average time-domain waveforms for seven signal classes (300 samples per class)  under an overlap = 50\% scenario ($K=3$, AWGN channel, and SNR = 18 dB). When distinct signal combinations share emitters, the correlation between their waveforms provides a foundation for a message-passing mechanism.}
	\label{fig:heatmap_corr}
\end{figure}
\footnotetext{Signals are preprocessed via average power normalization and separate I/Q z-scoring. For signal classes $y^a$ and $y^b$ with average time-domain waveforms $\bar{y}^a$ and $\bar{y}^b$, the Pearson correlation is $\rho_{ab} = \mathbb{E}[(\bar{y}^a - \mu_a)(\bar{y}^b - \mu_b)]/(\varOmega_a \varOmega_b)$, where $\mu_a$, $\mu_b$ and $\varOmega_a$, $\varOmega_b$ are the respective means and standard deviations.}
\begin{figure}[]
	\centering
	\includegraphics[width=0.43\textwidth]{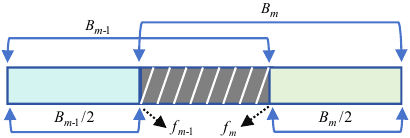}
	\caption{The schematic of ``overlap = 50\%'' between adjacent equal-bandwidth bands $B_{m-1} = B_{m}$. The bandwidths $B_{m-1}$ and $B_{m}$ of the $(m-1)$-th and $m$-th emitters represent equal bandwidths, with $B_{m-1}/2$ and $B_{m}/2$ indicating the half bandwidths of these two emitters, while $f_{m-1}$ and $f_{m}$ represent the center frequencies of these two bands, respectively. }
	\label{Bandover}
\end{figure}

To empirically validate \textit{Lemma} \ref{lemma02}, Fig.~\ref{fig:heatmap_corr} demonstrates the phenomenon by presenting the Pearson correlation matrix between overlapping signals under different emitter combinations with 50\% spectrum overlap (as illustrated in Fig.~\ref{Bandover}). The correlation matrix exhibits clear, diverse correlation structures, with signals from common emitters showing stronger correlations. This visual evidence indicates that the identification state of any individual emitter is inherently influenced by the presence and RF characteristics of coexisting emitters, thereby motivating the exploitation of inter-emitter correlations to enhance fingerprint disentanglement.
\begin{figure*}[t]
	\centering
	\includegraphics[width=0.9\textwidth]{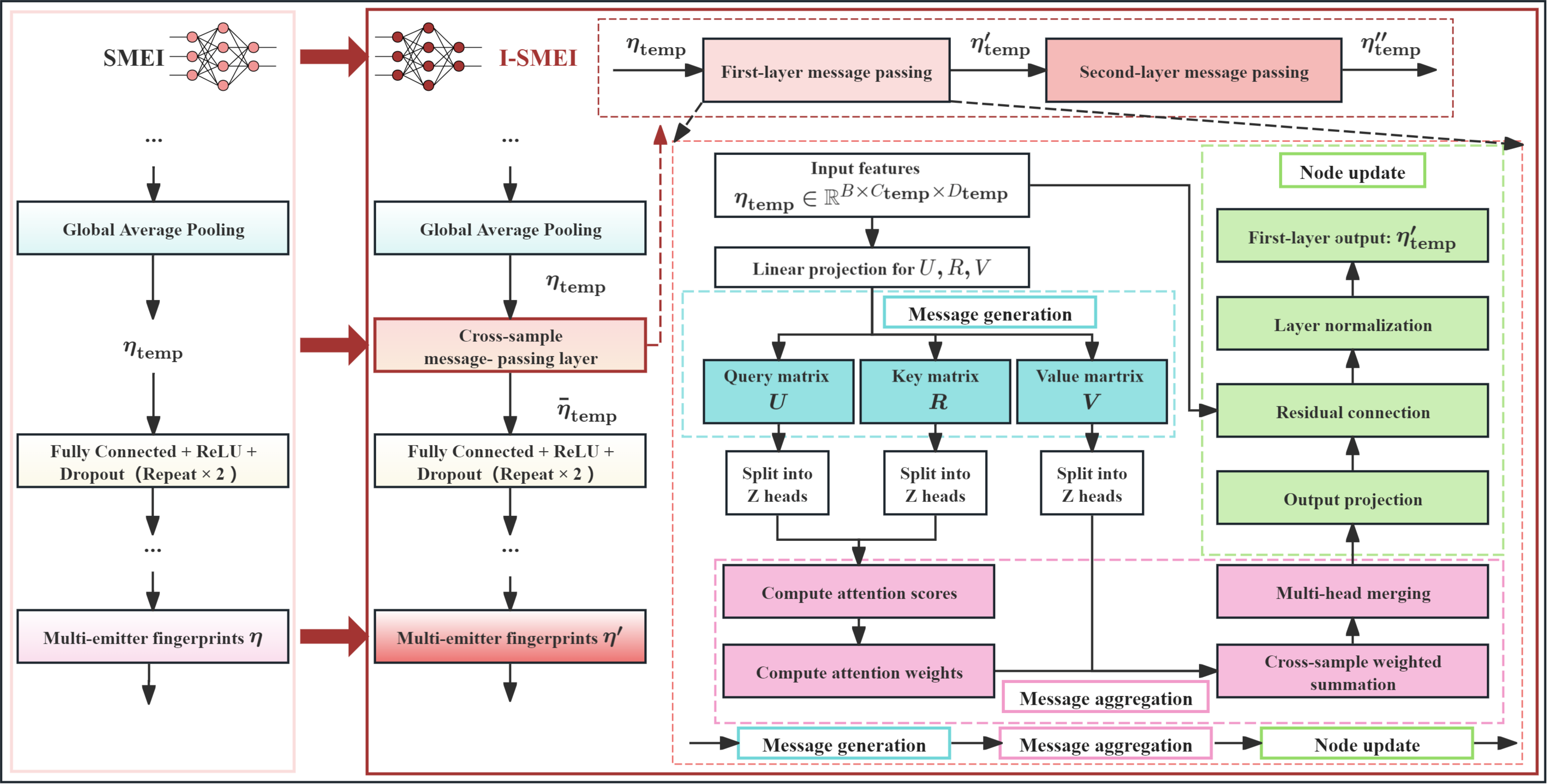}
	\caption{The message-passing layer structure of the proposed I-SMEI which employs multi-head attention mechanism following the three-step paradigm of message generation, message aggregation, and node update. }
	\label{Proposed_imSMEI_nomp}
\end{figure*}

Motivated by this observation, the core idea of I-SMEI is to enable mutual interaction among fingerprint feature representations of different emitters via a message-passing mechanism while maintaining computational efficiency, thereby implicitly capturing statistical correlations among emitters in the feature space. Specifically, we design a feature-level information-exchange mechanism that enables the RF fingerprint features corresponding to each emitter to acquire relevant information from other emitters' fingerprint features when computing the activation probability of that emitter. Mathematically, our I-SMEI framework enhances the proposed SMEI by introducing a message-passing mechanism to refine the fingerprint representation. The joint posterior  probability is estimated as
\begin{equation}
	\hat{\mathbb{P}}\left(\boldsymbol{\lambda}|\bm{y}\right) = \sigma_s\left(\bm{\eta}'; \boldsymbol{\theta}_{\varPhi}, \boldsymbol{\theta}_{\xi}\right),
\end{equation}
where $\bm{\eta}'$ represents the enhanced multi-emitter fingerprint obtained by applying the message-passing function $\varPhi(\cdot; \boldsymbol{\theta}_{\varPhi})$, and $\boldsymbol{\theta}_{\varPhi}$ denotes the learnable parameters of the message-passing function. The detailed implementation of $\varPhi$ is presented in the following subsection.

The tensor $\bar{\bm{\eta}}_{\mathrm{temp}}$ serves as an intermediate feature representation that has been further refined through message passing. Subsequently, we continue the standard processing pipeline, which excludes the message-passing layer, to extract multi-emitter fingerprints, ultimately yielding the final multi-emitter fingerprints $\bm{\eta}'$.
\subsection{Message passing via Multi-head Attention}
We formulate the message-passing process following the established neural message-passing paradigm \cite{NeuralMessagePassing2017}, which comprises three fundamental steps of message generation, message aggregation, and node update. Here, the message-passing function $\varPhi(\cdot)$ is implemented via a two-layer process of cross-sample information transfer that systematically follows the three-step paradigm. This mechanism efficiently captures inter-emitter correlations within the RF fingerprint features, as visualized in Fig.~\ref{Proposed_imSMEI_nomp}. Mathematically, the message-passing function is expressed as
\begin{equation} \label{eq:message_passing_function}
	\bar{\bm{\eta}}_{\mathrm{temp}} = \varPhi\left(\bm{\eta}_{\mathrm{temp}}; \boldsymbol{\theta}_{\varPhi}\right),
\end{equation}
where $\bm{\eta}_{\mathrm{temp}} \in \mathbb{R}^{B \times C_{\mathrm{temp}} \times D_{\mathrm{temp}}}$ represents the input of the message-passing process, with the output $\bar{\bm{\eta}}_{\mathrm{temp}} \in \mathbb{R}^{B \times C_{\mathrm{temp}} \times D_{\mathrm{temp}}}$ having the same dimensions;\footnote{$\bm{\eta}_{\mathrm{temp}}$ represents our intermediate feature tensor, distinct from the final fingerprint vector $\bm{\eta}$, as defined in \eqref{eq:feature_extraction}.} \(B\) represents the number of samples of the received overlapping signals; \(C_{\mathrm{temp}}\) is the number of features; \(D_{\mathrm{temp}}\) denotes the dimension of the features. The parameter set $\boldsymbol{\theta}_{\varPhi}$ is the collection of all learnable parameters of the message-passing function, which is formally defined by
\begin{equation} \label{eq:learnable_params}
	\boldsymbol{\theta}_{\varPhi} = \left\{ \bm{W}_{U}, \bm{W}_{R}, \bm{W}_{V}, \bm{W}_{O} \right\},
\end{equation}
where $\bm{W}_{U}$, $\bm{W}_{R}$, and $\bm{W}_{V}$ denote the weight matrices for message generation of query, key, and value vectors respectively; $\bm{W}_{O}$ is the output weight matrix for node update. For each attention head $z$, we partition the queries, keys, and values into $Z$ heads, which can be expressed as
\begin{align}
	\bm{U} &= [\bm{U}_1, \bm{U}_2, \ldots, \bm{U}_Z], \label{eq:U} \\
	\bm{R} &= [\bm{R}_1, \bm{R}_2, \ldots, \bm{R}_Z], \label{eq:R} \\
	\bm{V} &= [\bm{V}_1, \bm{V}_2, \ldots, \bm{V}_Z], \label{eq:V}
\end{align}
where $\bm{U}_z$, $\bm{R}_z$, and $\bm{V}_z$ denote the query, key, and value matrices for the $z$-th head, respectively. More precisely, these matrices are represented by
\begin{align}
	\bm{U}_{z} &= \bm{W}_{U,z}\bm{\eta}_{\mathrm{temp}}, \label{eq:U_z} \\
	\bm{R}_{z} &= \bm{W}_{R,z}\bm{\eta}_{\mathrm{temp}}, \label{eq:R_z} \\
	\bm{V}_{z} &= \bm{W}_{V,z}\bm{\eta}_{\mathrm{temp}}, \label{eq:V_z}
\end{align}
where $\bm{W}_{U,z}$, $\bm{W}_{R,z}$, and $\bm{W}_{V,z}$ are the weight matrices used to derive the query, key, and value matrices for the $z$-th head from the input features $\bm{\eta}_{\mathrm{temp}}$.

Next, during  the message aggregation phase, the attention scores for each attention head are computed as
\begin{equation} \label{eq:attention_scores}
	\bm{A}_{z}[i,j] = \varsigma\left(\frac{\bm{U}_{z,i}\bm{R}_{z,j}^{\top}}{\sqrt{D_{\mathrm{temp}}/Z}}\right),
\end{equation}
where $i,j=1,\dots,B$, $\bm{A}_{z} \in \mathbb{R}^{B \times B}$ is the weight matrix associated with the $z$-th attention head, and $\varsigma(\cdot)$ denotes the softmax normalization function, indicating that $\bm{A}_{z}[i,j]$ represents the correlation between the $i$-th overlapping signal and the $j$-th overlapping signal.\footnote{Although our scheme uses multiple attention heads, it operates across samples rather than within any single sample, and thus differs from the standard multi-head self-attention mechanism.} 

Following message aggregation, node update is performed on the fingerprint features from the $i$-th overlapping signal, which can be calculated by
\begin{equation} \label{eq:updated_features}
	\bm{\eta}_{\mathrm{temp},i}' = \mathcal{R}\left(\bm{\eta}_{\mathrm{temp},i} + \sum_{z=1}^{Z} \sum_{j=1}^{B} \bm{A}_{z}[i,j] \bm{V}_{z,j}\right),
\end{equation}
where $\mathcal{R}(\cdot)$ denotes the layer normalization function, and ~\eqref{eq:updated_features} indicates that the new representation of each overlapping signal is a weighted aggregation of its own fingerprint features and those of other overlapping signals, with weights dynamically determined by inter-signal information transfer to achieve efficient information integration.

\begin{algorithm}[t]
	\caption{Message-Passing Algorithm via Multi-head Attention}
	\label{alg:message_passing}
	\textbf{Input:} $\bm{\eta}_{\mathrm{temp}}$, $\boldsymbol{\theta}_{\varPhi}$, and $Z$. \\
	\textbf{Output:} $\bar{\bm{\eta}}_{\mathrm{temp}}$. \\
	1: Compute $\bm{U}, \bm{R}$, and $\bm{V}$ from using \eqref{eq:U}, \eqref{eq:R}, and \eqref{eq:V}; \\
	2: \textbf{for} $z = 1$ to $Z$ \textbf{do} \\
	3: \ \ Compute $\bm{U}_{z}, \bm{R}_{z}$, and $\bm{V}_{z}$ using \eqref{eq:U_z}, \eqref{eq:R_z}, and \eqref{eq:V_z}; \\
	4: \ \ Compute $\bm{A}_{z}[i,j]$ using \eqref{eq:attention_scores}; \\
	5: \textbf{end for} \\
	6: \textbf{for} $i = 1$ to $B$ \textbf{do} \\
	7: \ \ Update fingerprint features $\bm{\eta}_{\mathrm{temp},i}'$ using \eqref{eq:updated_features}; \\
	8: \textbf{end for} \\
	9: Combine features $\bm{\eta}_{\mathrm{temp}}'$ using \eqref{eq:residual_connection}; \\
	10: Compute final representation $\bm{\eta}_{\mathrm{temp}}''$ using \eqref{eq:final_representation}; \\
	11: \textbf{Return} $\bar{\bm{\eta}}_{\mathrm{temp}} = \bm{\eta}_{\mathrm{temp}}''$.
\end{algorithm}

The $\bm{\eta}_{\mathrm{temp}}$ is then combined with the updated fingerprint features through a residual connection, which is given by
\begin{equation} \label{eq:residual_connection}
	\bm{\eta}_{\mathrm{temp}}' = \mathcal{R}\left(\bm{\eta}_{\mathrm{temp}} + \bm{W}_{O}\left(\sum_{z=1}^{Z} \bm{A}_{z} \bm{V}_{z}\right)\right).
\end{equation}

Consequently, the final feature representation $\bm{\eta}_{\mathrm{temp}}''$ after the second round of information transfer can be expressed as 
\begin{align}
	\bm{\eta}_{\mathrm{temp}}'' &= \mathcal{R}\left(\bm{\eta}_{\mathrm{temp}}' + \bm{W}_{O}'\left(\sum_{z=1}^{Z} \left(\bm{A}_{z}' \bm{V}_{z}'\right)\right)\right), \label{eq:final_representation} \\
	\bar{\bm{\eta}}_{\mathrm{temp}} &= \bm{\eta}_{\mathrm{temp}}'',
\end{align}
where indicates that $\bar{\bm{\eta}}_{\mathrm{temp}}$ is the result of two rounds of iterative information transfer, thereby refining the output of the message-passing function. Note that the number of iterations in message passing does not need to be limited to the two rounds considered in this paper, and more rounds can be performed. The complete procedure for $\varPhi(\cdot)$ is detailed in Algorithm~\ref{alg:message_passing}.

We place the message-passing module after global average pooling and before the fully connected layer because global average pooling transforms time-frequency features into a compact feature tensor, providing a clear representation of the input overlapping signals. This transformation is essential, as it allows message passing to perform weighted integration of the features from distributed emitters within the batch, explicitly modeling the inter-correlations among them. Here, $\bar{\bm{\eta}}_{\mathrm{temp}}$ is still an intermediate feature tensor. Subsequently, we continue with the parts of the standard processing flow that do not include the message-passing layer (see Section~\ref{sec4.1}) to perform multi-emitter fingerprint extraction, ultimately obtaining the feature $\bm{\eta}'$. Immediately, the joint optimization problem for the I-SMEI framework integrates the message-passing function can be expressed as
\begin{equation}\label{eq:joint_optimization_ismei}
	\left(\boldsymbol{\theta}_{\xi}^*, \boldsymbol{\theta}_{\varPhi}^*\right) 
	= \underset{\boldsymbol{\theta}_{\xi}, \boldsymbol{\theta}_{\varPhi}}{\arg \min}  \,\mathcal{L}_{\mathrm{BCE}}\left(\hat{\bm{p}}_{\varPhi}, \boldsymbol{\lambda}\right),
\end{equation}
where $\hat{\bm{p}}_{\varPhi} = [\hat{p}_1, \hat{p}_2, \ldots, \hat{p}_K]^{\mathsf{T}}$ denotes the predicted activation probabilities with message-passing enhancement; $\hat{\mathbb{P}}\left(\boldsymbol{\lambda}|\bm{y}\right) = \sigma_s\left(\bm{\eta}'; \boldsymbol{\theta}_{\varPhi}, \boldsymbol{\theta}_{\xi}\right)$ is the predicted probability for the $m$-th emitter within message passing; $\mathcal{L}_{\mathrm{BCE}}$ is defined as in \eqref{eq:bce_loss}. 
\begin{remark}
The I-SMEI method employs a message-passing mechanism to dynamically aggregate RF fingerprints from overlapping signals, capturing the correlations among distributed emitters. Specifically, through message passing, each emitter can adaptively integrate feature information from neighboring sources to achieve mutual feature-level enhancement, thereby improving the identification capability.
\end{remark}

\setcounter{table}{1}
\begin{table*}[t]
	\centering
	\caption{\textsc{Transmitter parameters}}
	\label{tab:device-params}
	\begin{tabular}{lcccccc}
		\toprule
		Device & \(G\) & \phantom{-}\(\zeta\) \(\cdot\frac{\pi}{180}\) & \( a^{\mathrm{ST}} \) & \( f^{\mathrm{ST}} \) \textmd{(MHz)} & \(\xi_m \times 10^{-3}\) & [$b_1$,$b_2$,$b_3$] \\
		\midrule
		Dev1 & 0.9998 & -0.0180\ & 0.0082 & 0.129 & 1.3 + 8.2\(j\)  & [1.00, 0.50, 0.30] \\
		Dev2 & 1.0056 & \phantom{-}0.0175\ & 0.0075 & 0.132 & 1.5 + 7.2\(j\) & [1.00, 0.08, 0.60] \\
		Dev3 & 1.0102 & \phantom{-}0.0120\ & 0.0070 & 0.123 & 1.1 + 6.8\(j\) & [1.00, 0.01, 0.01] \\
		Dev4 & 0.9992 & \phantom{-}0.0030\ & 0.0087 & 0.135 & 1.7 + 9.0\(j\) & [1.00, 0.01, 0.40] \\
		Dev5 & 0.9982 & \phantom{-}0.0240\ & 0.0090 & 0.119 & 2.0 + 6.5\(j\) & [1.00, 0.60, 0.08] \\
		\bottomrule
	\end{tabular}
\end{table*}
\setcounter{table}{2}
\begin{table*}[]
	\centering
	\caption{\textsc{Digital IF configurations under different spectral overlap levels ($K=3$)}}
	\label{tab:dif-config}
	\begin{tabular}{lccc}
		\toprule
		Overlap/(GHz) & 0\% & 50\%  & 100\% \\
		\midrule
		Configuration 1 & [2.4140, 2.4400, 2.4660] & [2.4270, 2.4400, 2.4530] & [2.4270, 2.4270, 2.4270] \\
		Configuration 2 & [2.4140, 2.4660, 2.4400] & [2.4270, 2.4530, 2.4400] & [2.4530, 2.4530, 2.4530] \\
		Configuration 3 & [2.4400, 2.4140, 2.4660] & [2.4400, 2.4270, 2.4530] & [2.4400, 2.4400, 2.4400] \\
		Configuration 4 & [2.4400, 2.4660, 2.4140] & [2.4400, 2.4530, 2.4270] & [2.4140, 2.4140, 2.4140] \\
		Configuration 5 & [2.4660, 2.4140, 2.4400] & [2.4530, 2.4270, 2.4400] & [2.4660, 2.4660, 2.4660] \\
		Configuration 6 & [2.4660, 2.4400, 2.4140] & [2.4530, 2.4400, 2.4270] & --- \\
		\bottomrule
	\end{tabular}
\end{table*}
\setcounter{table}{0}
\begin{table}[]
	\centering
	\caption{\textsc{System Configuration Parameters in Multi-Emitter Identification Experiments}}
	\label{tab:combined}
	\resizebox{\linewidth}{!}{%
		\begin{tabular}{lll}
			\toprule
			Category & Parameter & Value \\
			\midrule
			\multirow{6}{*}{Signal parameters} 
			& Sampling rate & 120 MHz \\
			& Symbol rate & 20 MHz \\
			& Oversampling ratio & 6 \\
			& RRC roll-off & \(\alpha\) = 0.3 \\
			& RRC span & \(10 \) (symbols) \\
			& Rician factor & 10 dB \\
			\midrule
			\multirow{8}{*}{Training hyperparameters} 
			& Batch size & 128 \\
			& Epochs & 100 \\
			& Initial learning rate & $5 \times 10^{-4}$ \\
			& Step size & 25 \\
			& Learning rate decay factor & 0.5 \\
			& Warmup epochs & 5 \\
			& Warmup ratio & 0.1 \\
			& Early stopping patience & 25 epochs \\
			\midrule
			\multirow{3}{*}{Optimizer}
			& Type & Adam \\
			& Base learning rate & $5 \times 10^{-4}$ \\
			& Learning rate scheduler & Linear warmup + Step decay \\
			\bottomrule
		\end{tabular}%
	}
\end{table}

\section{PERFORMANCE ANALYSIS}  
In this section, we first introduce the experimental setup and baseline approaches. Following that, the performance of the proposed SMEI and I-SMEI designs is thoroughly assessed in terms of identification accuracy, computational complexity, and robustness, supported by comprehensive numerical results and ablation analyses.
\label{sec6}
\subsection{Simulation Setup and Baselines}
This section describes the experimental setup for dataset synthesis, including signal modulation, spectrum allocation, and channel models. We use quadrature phase-shift keying for drone communication in the 2.40-2.48 GHz range \cite{Yu2024DroneRFa}. To avoid boundary transients, we extract input signals from the steady-state portion of each sequence. Particularly, we consider three levels of spectrum overlap: 0\%, 50\%, and 100\%. In the ``overlap = 50\%'' configuration, there is a 13 MHz separation between adjacent carrier frequencies. The training, validation, and test datasets are split 4:1:1, with 1200, 300, and 300 samples per emitter, respectively. Detailed system configuration parameters for the multi-emitter identification trials are provided in Table~\ref{tab:combined}, with the specific device parameters and digital intermediate frequency configurations detailed in Table~\ref{tab:device-params} and Table~\ref{tab:dif-config}, respectively.

Subsequently, we conduct ablation experiments to systematically evaluate the proposed methods. First, we compare SMEI with the multiclass classification scheme from \cite{Sankhe2019ORACLE} in terms of identification accuracy, parameter count, and computational complexity. Second, we compare SMEI with a multi-label approach using STFT feature extraction as described in \cite{Yu2024DroneRFa}. Third, for enhanced schemes, we compare I-SMEI with the Q2L scheme \cite{Liu2021Query2Label}, both of which incorporate feature enhancement mechanisms into the multi-label framework. The Q2L scheme uses a single Transformer decoder layer with 4 attention heads. All methods are evaluated under unified experimental conditions. For each method, we select and save the best-performing model via the training metrics $P_{\text{subset}}$ and $P_{\text{hamming}}$, and both metrics are reported during testing.

\subsection{Numerical Results}
\begin{observation}
The proposed SMEI scheme achieves comparable identification accuracy to that of existing multiclass methods while significantly reducing model complexity as $K$ increases. (cf. Figs.~\ref{fig:pc_subset_allk}, \ref{fig:params_allk}, and Table~\ref{tab:complexity-comparison})
\end{observation}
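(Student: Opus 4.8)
The plan is to support this Observation through a two-pronged argument: an analytical complexity comparison and a matched-conditions empirical evaluation. For the complexity claim, I would first isolate the sole architectural difference between SMEI and the multiclass baseline of \cite{Sankhe2019ORACLE}, namely the final classification head. Both schemes share the identical deep residual backbone $\xi(\cdot)$, so their penultimate feature dimension $d$ is the same; the multiclass head maps these $d$ features to a softmax over all $|\bm{\varLambda}| = 2^K - 1$ non-empty subsets, whereas the SMEI head maps them to $K$ independent sigmoid outputs via \eqref{eq:sigmoid_activation}. Counting weights in this final layer gives $d(2^K-1)$ versus $dK$, and the forward-pass multiply--accumulate operations in this layer scale the same way. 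Taking the ratio yields an exponential-to-linear reduction of $(2^K-1)/K$, which I would then tabulate against $K$ in Table~\ref{tab:complexity-comparison} and plot in Fig.~\ref{fig:params_allk} to make the growth gap visible.

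For the accuracy claim, I would train both schemes under the unified protocol of Table~\ref{tab:combined}---identical backbone, optimizer, learning-rate schedule, and training budget---and evaluate subset accuracy $P_{\text{subset}}$ and Hamming accuracy $P_{\text{hamming}}$ across a range of $K$ and the three overlap levels. The key quantity is the accuracy gap $|P_{\text{subset}}^{\text{multiclass}} - P_{\text{subset}}^{\text{SMEI}}|$, which I would show remains small across all $K$, as visualized in Fig.~\ref{fig:pc_subset_allk}. A supporting argument is that the two label encodings are information-theoretically equivalent: any non-empty subset corresponds bijectively to a binary vector $\bm{\lambda} \in \{0,1\}^K \setminus \{\bm{0}\}$, so the multi-label head can represent exactly the same set of decisions as the combinatorial head, and both are governed by the same upper bound $\widetilde{P}_{\text{subset}}$ from \textit{Lemma}~\ref{lemma01}.

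The hard part will be the accuracy comparison rather than the complexity count. The multiclass softmax head directly models the joint distribution over subsets and can in principle capture arbitrary inter-emitter correlations, whereas the sigmoid-based SMEI head factorizes the prediction per emitter and implicitly assumes approximate conditional independence of labels given the extracted features. When signals overlap, \textit{Lemma}~\ref{lemma02} shows these correlations are genuinely non-zero, so one might a priori expect SMEI to forfeit accuracy. The crux of the argument is therefore to demonstrate empirically that the residual backbone already disentangles enough per-emitter fingerprint information that the factorized head loses little, and to attribute any residual gap---together with its closure under overlap---to the message-passing enhancement of I-SMEI in Section~\ref{sec5}. A secondary difficulty is keeping the comparison fair as $K$ grows, since the multiclass head suffers from severe per-class sample sparsity, with each of the $2^K-1$ classes receiving exponentially fewer examples under a fixed data budget. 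I would control for this by fixing the per-emitter sample counts stated in Section~\ref{sec6} and note that this sparsity works against the baseline, so the observed parity is in fact a conservative conclusion.
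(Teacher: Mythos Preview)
Your proposal is correct and follows essentially the same approach as the paper: an analytical head-layer complexity comparison (linear $K$ versus exponential $2^K-1$) paired with a matched-protocol empirical accuracy comparison across $K$ and SNR. The only minor discrepancy is presentational---the paper's Table~\ref{tab:complexity-comparison} reports full asymptotic orders $\mathcal{O}(C_0^2 + C_0K)$ versus $\mathcal{O}(C_0^2 + C_02^K)$ (including the shared backbone term) rather than your head-only ratio $(2^K-1)/K$, and your additional discussion of label-factorization risk via \textit{Lemma}~\ref{lemma02} and per-class sample sparsity goes beyond what the paper actually argues here.
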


\begin{figure}[t]
	\centering
	\includegraphics[width=0.8\linewidth]{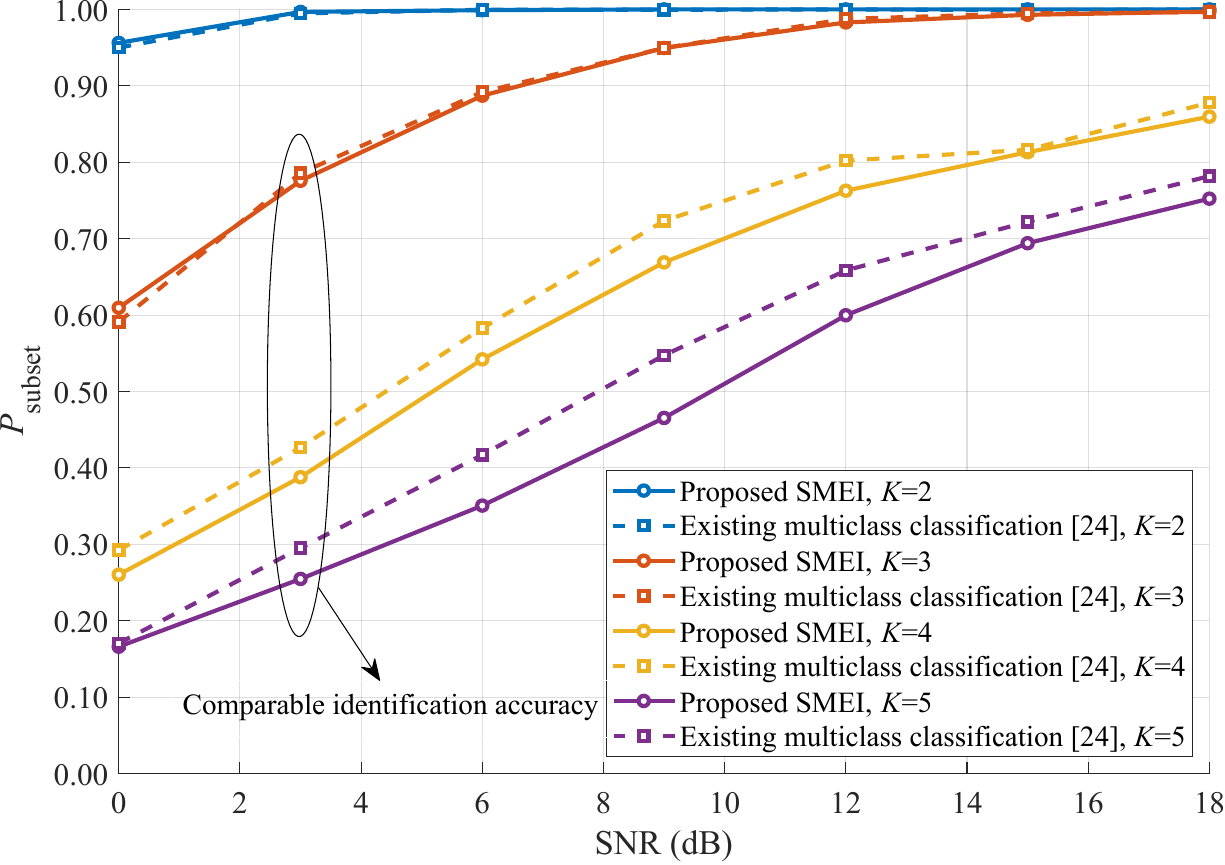}
	\caption{Comparison of SMEI and existing multiclass classification \cite{Sankhe2019ORACLE} on $P_{\text{subset}}$ with respect to the maximum number of emitters \(K\). Both approaches achieve comparable identification accuracy across different SNRs.}
	\label{fig:pc_subset_allk}
\end{figure}

\begin{figure}[]
	\centering
	\includegraphics[width=0.244\textwidth]{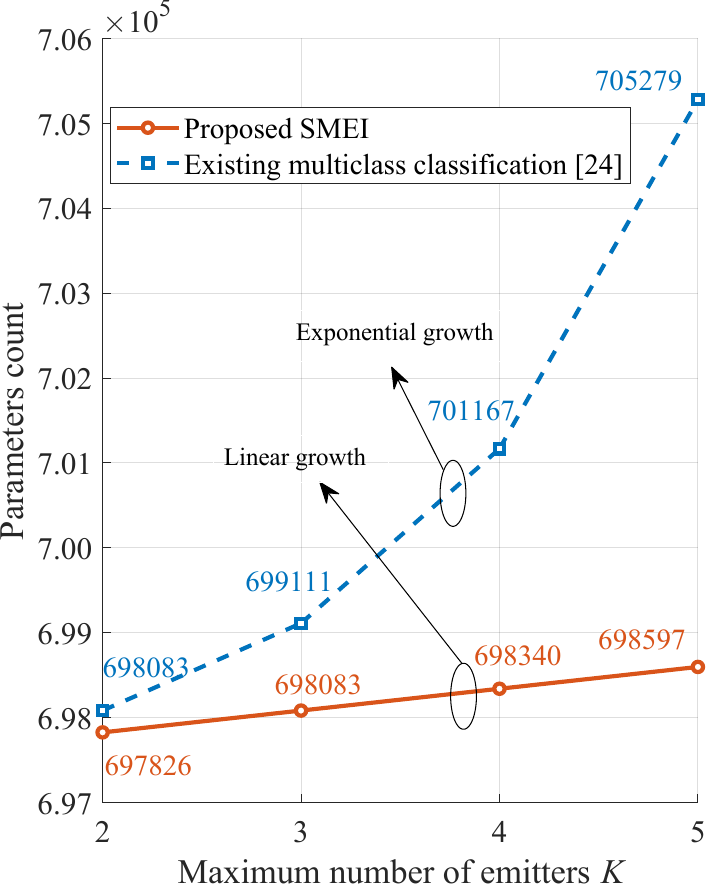}
	\hspace{-0.012\textwidth}
	\includegraphics[width=0.244\textwidth]{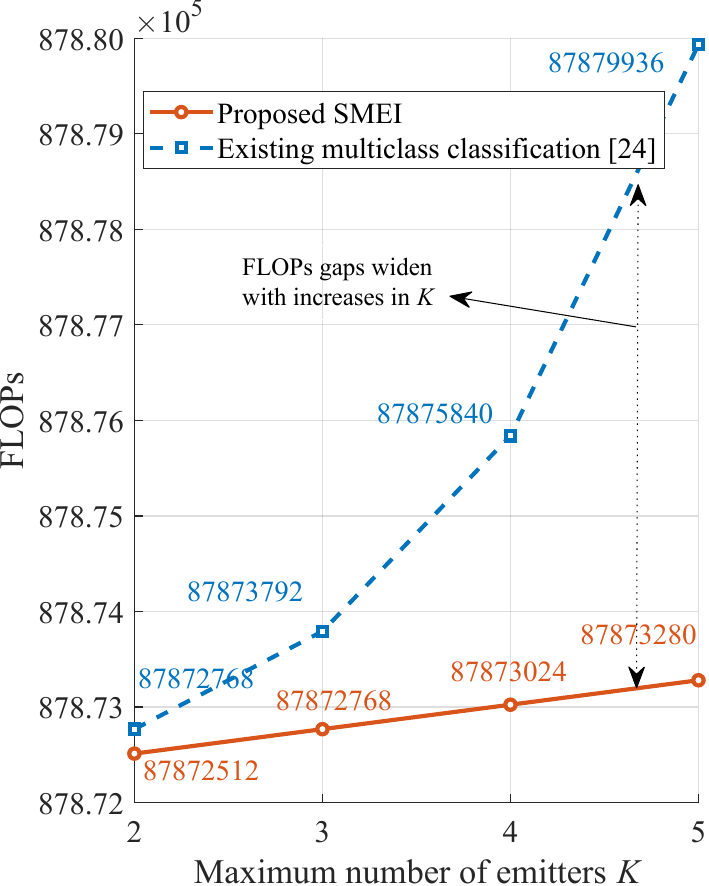}
	\caption{Comparison of proposed SMEI and existing multiclass classification \cite{Sankhe2019ORACLE} regarding model complexity with respect to the maximum number of emitters \(K\). (a) Parameter count; (b) FLOPs. The parameter count and FLOPs of SMEI increase linearly with \(K\), whereas the existing method exhibits exponential growth.}
	\label{fig:params_allk}
\end{figure}
\setcounter{table}{3}
\begin{table}[t]
	\centering
	\caption{\textsc{Comparison of Asymptotic Complexity between proposed SMEI and Existing Multiclass Classification}}
	\label{tab:complexity-comparison}
	\begin{tabular}{lm{2.28cm}m{3cm}}
		\toprule
		Complexity metric & Proposed SMEI & Multiclass methods \cite{Sankhe2019ORACLE} \\
		\midrule
		Parameter count & $\mathcal{O}\left(C_0^2 + C_0K\right)$ & $\mathcal{O}\left(C_0^2 + C_02^K\right)$ \\
		FLOPs & $\mathcal{O}\left(C_0^2l_{\mathrm{in}} + C_0K\right)$ & $\mathcal{O}\left(C_0^2l_{\mathrm{in}} + C_02^K\right)$ \\
		\bottomrule
	\end{tabular}
\end{table}

Fig.~\ref{fig:pc_subset_allk} compares the identification accuracy of the proposed SMEI with existing multiclass methods. It is observed that SMEI achieves comparable identification accuracy across varying SNR and maximum emitter numbers $K$. Furthermore, let the input signal length be \( l_{\mathrm{in}} \) and the base channel number be \( C_0 \). Table~\ref{tab:complexity-comparison} compares the model complexity in terms of parameter count and Floating Point Operations (FLOPs). The results show that, compared to existing multiclass methods that treat multi-emitter identification as a single multiclass classification problem with $2^K-1$ classes, the proposed SMEI framework significantly reduces both metrics. Specifically, the parameter complexity decreases from $\mathcal{O}(C_0^2 + C_02^K)$ to $\mathcal{O}(C_0^2 + C_0K)$, while the FLOPs are reduced from $\mathcal{O}(C_0^2l_{\mathrm{in}} + C_02^K)$ to $\mathcal{O}(C_0^2l_{\mathrm{in}} + C_0K)$. This improvement results from reformulating the original multiclass structure into a multi-label framework with $K$ independent binary classification heads, thus avoiding exponential growth of the output layer. 

Additionally, the theoretical complexity advantage is fully validated by experimental results in Fig.~\ref{fig:params_allk}. It can be seen that the proposed SMEI scheme exhibits linear growth in both parameter count and FLOPs as $K$ increases. For instance, when $K$ rises from 2 to 5, the parameter count increases by only 771, while FLOPs grow by merely 0.41M. In contrast, the existing multiclass method demonstrates exponential growth under the same conditions, with the parameter count reaching 705,279 and FLOPs increasing by 7.07M. These results indicate that the proposed SMEI scheme not only achieves high identification accuracy comparable to existing methods but also maintains computational complexity that scales linearly with the maximum number of emitters, significantly enhancing scalability in distributed networks.
\begin{figure}[t]
	\centering
	\subfloat[Overlap = 0\%]{%
		\includegraphics[width=0.2405\textwidth]{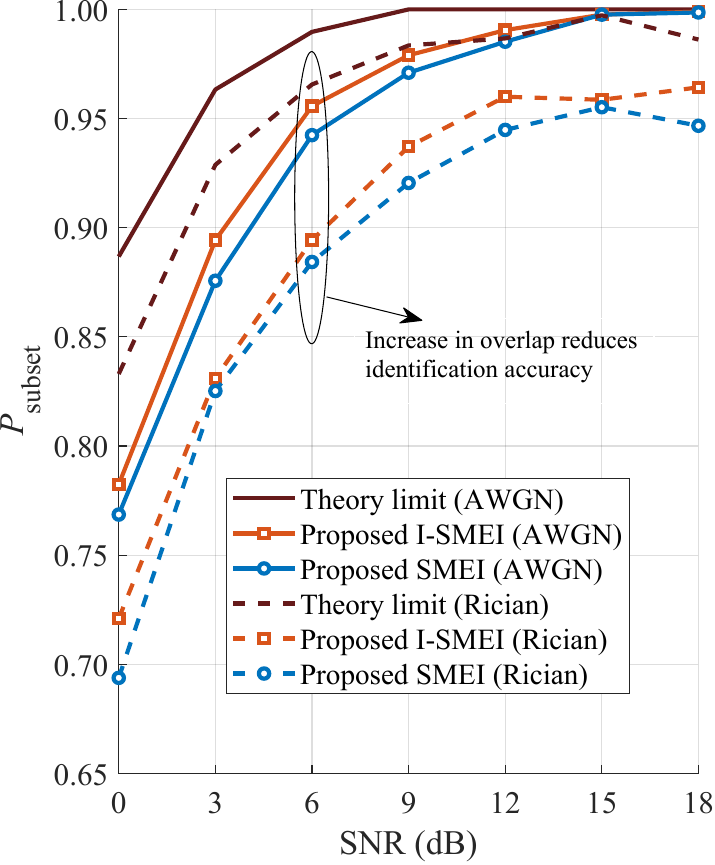}%
		\label{fig:subset_acc_100}
	}
	\hspace{-0.013\textwidth} 
	\subfloat[Overlap = 50\%]{%
		\includegraphics[width=0.2405\textwidth]{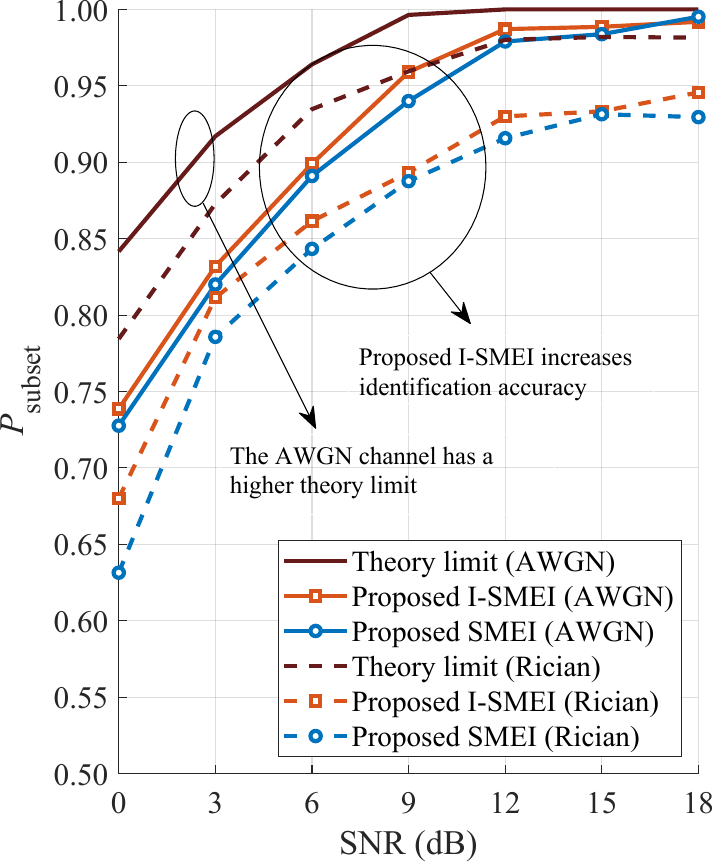}%
		\label{fig:subset_acc_50}
	}
	\caption{The variation of $P_{\text{subset}}$ w.r.t SNR at different overlap ratios (\(K=3\)). This trend shows that I-SMEI improves identification accuracy over SMEI and has greater potential to approach the theoretical limit in AWGN channels.}
	\label{fig:subset_acc_all}
\end{figure}
\begin{figure}[]
	\centering
	
	\subfloat[Overlap = 0\%]{%
		\includegraphics[width=0.2405\textwidth]{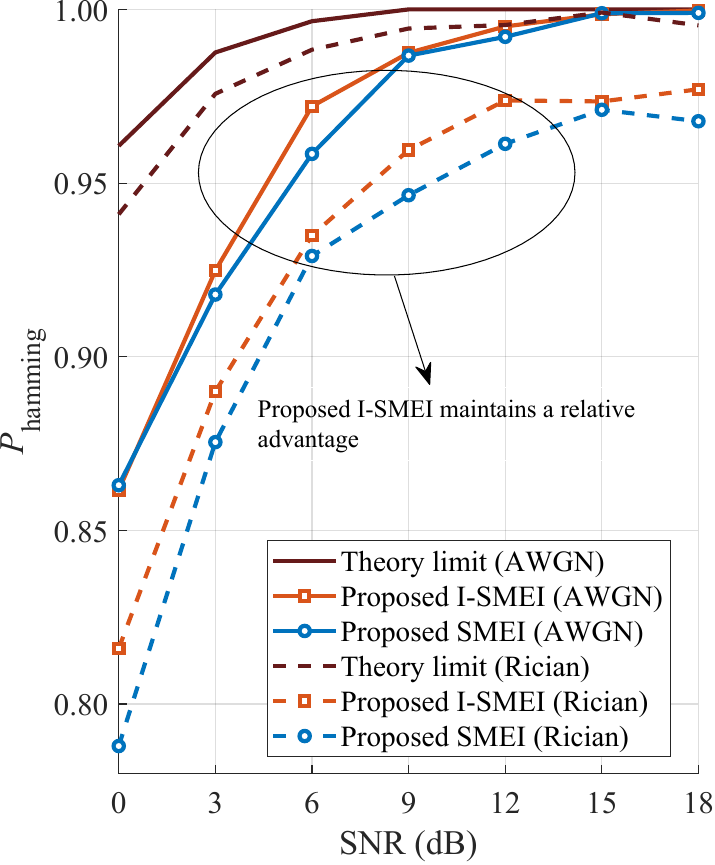}%
		\label{fig:hamming_acc_100}
	}
	\hspace{-0.013\textwidth} 
	\subfloat[Overlap = 50\%]{%
		\includegraphics[width=0.2405\textwidth]{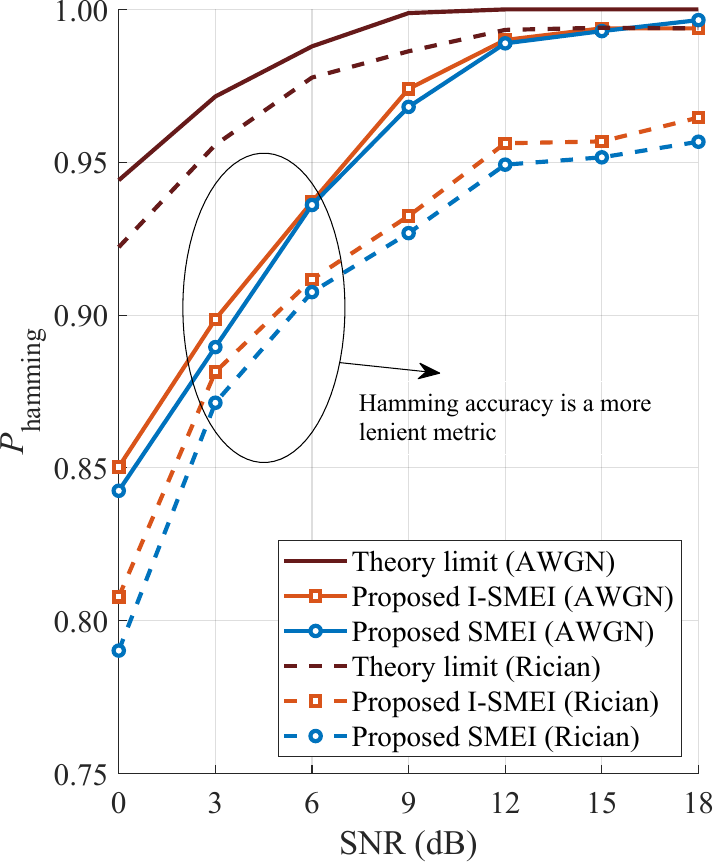}%
		\label{fig:hamming_acc_50}
	}
	\caption{The variation of $P_{\text{hamming}}$ w.r.t. SNR at different overlap ratios (\(K=3\)), indicating that I-SMEI maintains a performance advantage even under this more relaxed metric compared to $P_{\text{subset}}$.}
	\label{fig:hamming_acc_all}
\end{figure}

\begin{observation}
The proposed I-SMEI demonstrates improved performance over SMEI with both evaluation metrics nearly reach their theoretical limits at high SNR. (cf. Figs.~\ref{fig:subset_acc_all} and \ref{fig:hamming_acc_all})
\end{observation}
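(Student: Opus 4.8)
The plan is to establish this observation empirically through a controlled ablation in which SMEI and I-SMEI share an identical backbone, training schedule, and data splits (Table~\ref{tab:combined}), differing only in the message-passing module $\varPhi(\cdot)$ of Section~\ref{sec5}. I would sweep the SNR across the three overlap ratios (0\%, 50\%, 100\%), record both $P_{\text{subset}}$ and $P_{\text{hamming}}$ on the test set, and overlay the Fano-based bounds $\widetilde{P}_{\text{subset}}$ and $\widetilde{P}_{\text{hamming}}$ from \textit{Lemma}~\ref{lemma01}. The observation splits into two sub-claims: that the I-SMEI curves dominate the SMEI curves, and that at high SNR both I-SMEI metrics sit just below their theoretical limits.

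For the first sub-claim (improved performance), the mechanism is \textit{Lemma}~\ref{lemma02}: received signals that share active emitters carry strictly positive conditional mutual information, and the multi-head message-passing layer is designed to exploit exactly this structure for fingerprint disentanglement. I would therefore verify that I-SMEI outperforms SMEI pointwise and that the gap widens in the overlap $=50\%$ and $100\%$ regimes, where inter-emitter correlations are strongest and hence the most information is available to the message-passing aggregation in~\eqref{eq:updated_features}. I would check consistency across both metrics, expecting the strict $P_{\text{subset}}$ to benefit at least as much as the lenient $P_{\text{hamming}}$, since any per-label error propagates to the whole-vector match.

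For the second sub-claim (approaching the limit at high SNR), I would argue through the structure of \textit{Lemma}~\ref{lemma01}: as SNR grows the noise variance vanishes, so $I(\boldsymbol{\lambda}; \boldsymbol{y}) \to H(\boldsymbol{\lambda})$, the constant $C = \log(|\bm{\varLambda}|-1) - H(\boldsymbol{\lambda}) + I(\boldsymbol{\lambda}; \boldsymbol{y})$ increases, and the bound enters the $\min(1, g^{-1}(C))$ branch, driving $\widetilde{P}_{\text{subset}} \to 1$ and, via $\widetilde{P}_{\text{hamming}} = (\widetilde{P}_{\text{subset}})^{1/K}$, $\widetilde{P}_{\text{hamming}} \to 1$ as well. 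Concretely, I would estimate $I(\boldsymbol{\lambda}; \boldsymbol{y})$ at each SNR using the two-stage MINE procedure of Section~\ref{sec:mine}, substitute into the bound, and demonstrate a small, shrinking gap between the high-SNR I-SMEI accuracy and the estimated limit.

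The hard part will be the theoretical-limit comparison rather than the head-to-head ablation. MINE returns only a variational lower bound on mutual information, obtained after a lossy feature-extraction stage, so the resulting ``upper bound'' on accuracy is itself a statistical estimate with nontrivial variance; I would need to stabilize it by aggregating over all test samples into a single scalar per SNR and to argue that the observed I-SMEI accuracy staying below this estimate is a consistent qualitative trend rather than a coincidental artifact. A secondary obstacle is guaranteeing fairness of the ablation so that the advantage of I-SMEI is attributable to message passing and not to extra capacity, which I would address by matching parameter budgets and optimizer settings across the two schemes.
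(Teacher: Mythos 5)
Your proposal matches the paper's own treatment: this observation is established empirically by the ablation in Figs.~\ref{fig:subset_acc_all} and \ref{fig:hamming_acc_all} (shared backbone with SMEI, differing only in the cross-sample message-passing module), with the improvement attributed to the correlations formalized in \textit{Lemma}~\ref{lemma02} and the theoretical limits overlaid via the \textit{Lemma}~\ref{lemma01} bounds computed through the two-stage MINE estimate of $I(\boldsymbol{\lambda};\boldsymbol{y})$. Your added high-SNR limiting argument ($C$ growing so that $\widetilde{P}_{\text{subset}}\to 1$) and your expectation that the I-SMEI gap widens with overlap ratio are reasonable embellishments the paper does not explicitly claim, but they do not change the route.
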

Fig.~\ref{fig:subset_acc_all} and Fig.~\ref{fig:hamming_acc_all} compare the performance of I-SMEI and SMEI in terms of subset accuracy and Hamming accuracy across different signal overlap ratios and channel conditions. It can be seen from the comparison between Fig.~\ref{fig:subset_acc_all}(a) and Fig.~\ref{fig:subset_acc_all}(b) that the identification accuracy of both methods decreases significantly as the overlap ratio increases from 0\% to 50\%. However, within each subfigure, I-SMEI consistently outperforms SMEI, particularly at high SNR, where it approaches the theoretical limit.

The network architectures of the proposed I-SMEI and proposed SMEI share the same core backbone, differing only in the cross-sample interaction module. Here, this performance improvement can be attributed to the multi-head attention mechanism introduced in I-SMEI, which enables cross-sample message passing and effectively leverages correlations inherent in distributed emitter environments. This allows each emitter to adaptively integrate feature information from neighboring sources, thereby enabling mutual feature-level enhancement. Notably, as shown in Fig.~\ref{fig:hamming_acc_all}, although Hamming accuracy is a more lenient evaluation metric, its trend is consistent with that of subset accuracy, and I-SMEI maintains a performance advantage under this metric as well. These results indicate that across various overlap ratios and SNRs, the proposed I-SMEI steadily improves identification accuracy compared with the proposed SMEI.

\begin{figure}[t]
	\centering
	
	\includegraphics[width=0.245\textwidth]{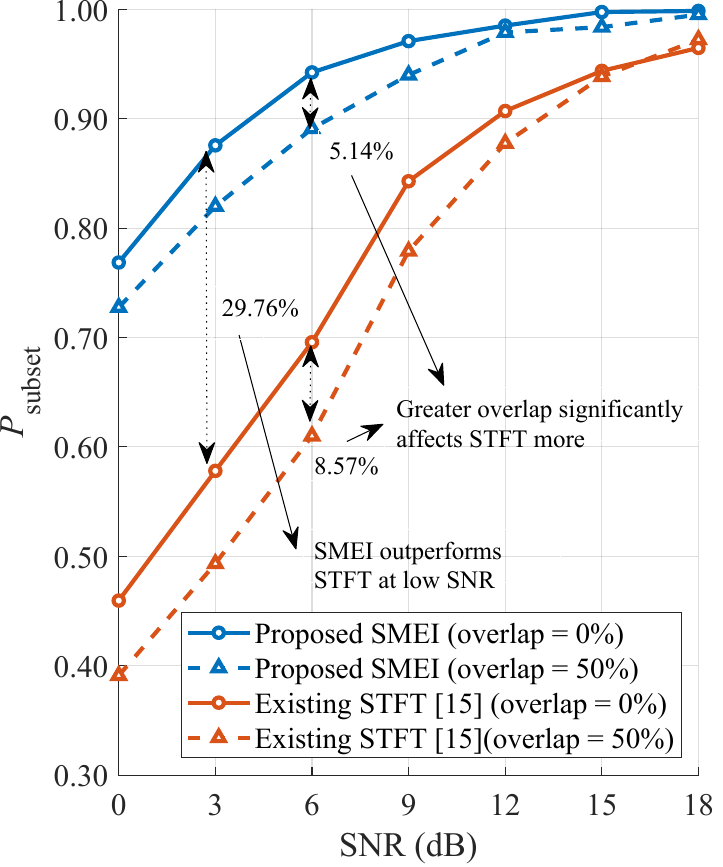}
	\hspace{-0.015\textwidth} 
	\includegraphics[width=0.245\textwidth]{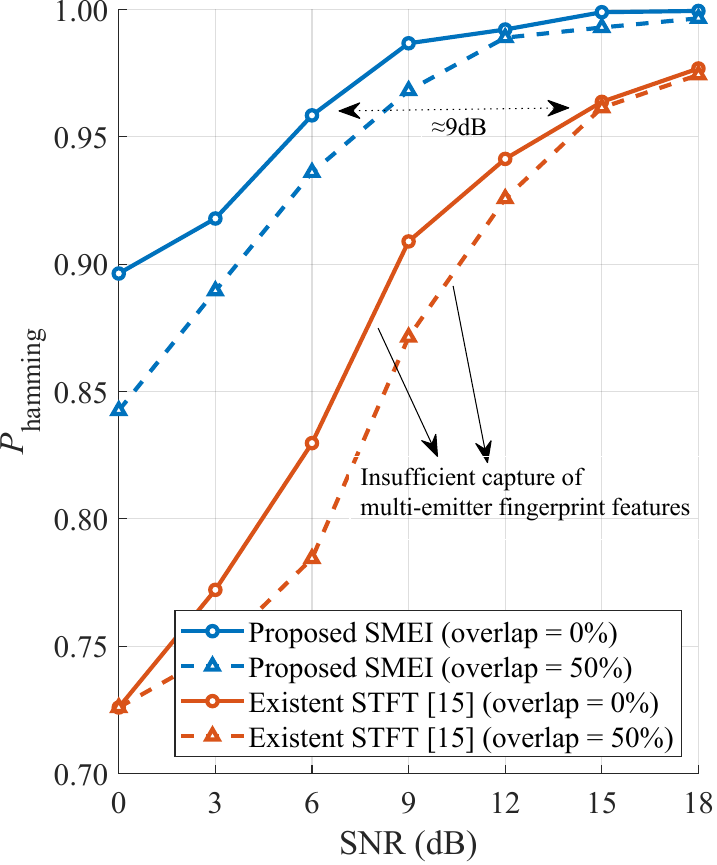}
	
	\caption{The comparison between the proposed SMEI and the existing STFT \cite{Yu2024DroneRFa} scheme (\(K=3\)). The proposed SMEI scheme demonstrates a significant advantage in identification accuracy, particularly at low SNRs, while the existing STFT scheme exhibits limitations in representing multi-emitter fingerprint features, leading to a more pronounced performance degradation at higher overlap ratios than proposed SMEI.}
	\label{fig:Hanming_subset_acc_AWGN_0_AND_50}
\end{figure}

\begin{figure}[t]
	\centering
	
	\includegraphics[width=0.243\textwidth]{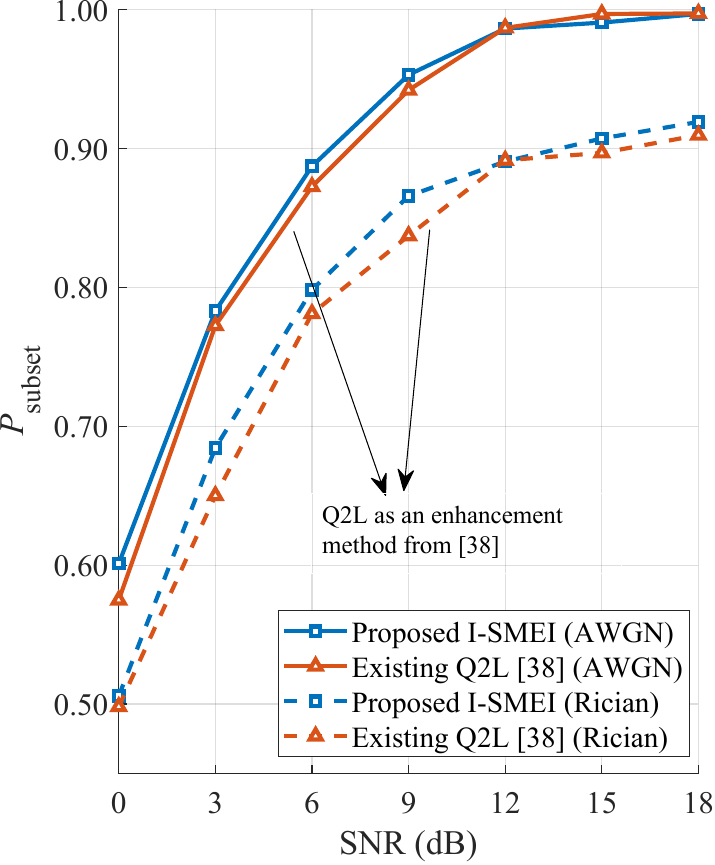}
	\hspace{-0.01\textwidth} 
	\includegraphics[width=0.243\textwidth]{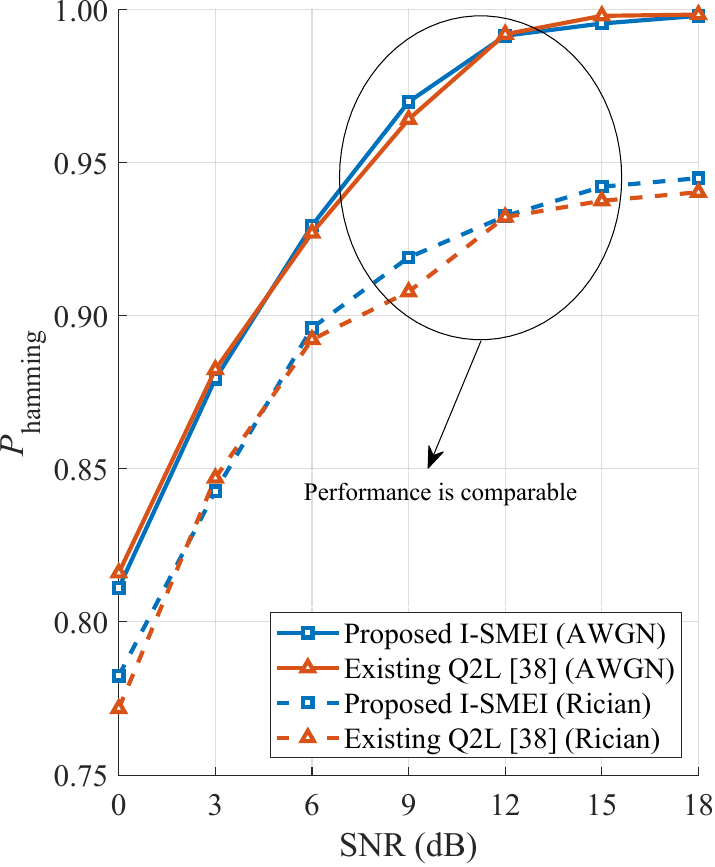}
	
	\caption{The comparison between the proposed I-SMEI and the existing Q2L \cite{Liu2021Query2Label} scheme (\(K=3\), overlap = 100\%) reveals that ours achieves superior robustness and identification accuracy across multiple environments, while the existing Q2L delivers comparable performance only on \(P_{\text{hamming}}\).}
	\label{fig:acc_comparison_ratio100}
\end{figure}

\begin{figure}[t]
	\centering
	
	\subfloat[\(P_{\text{subset}}\) at SNR = 12 dB]{%
		\includegraphics[width=0.42\textwidth]{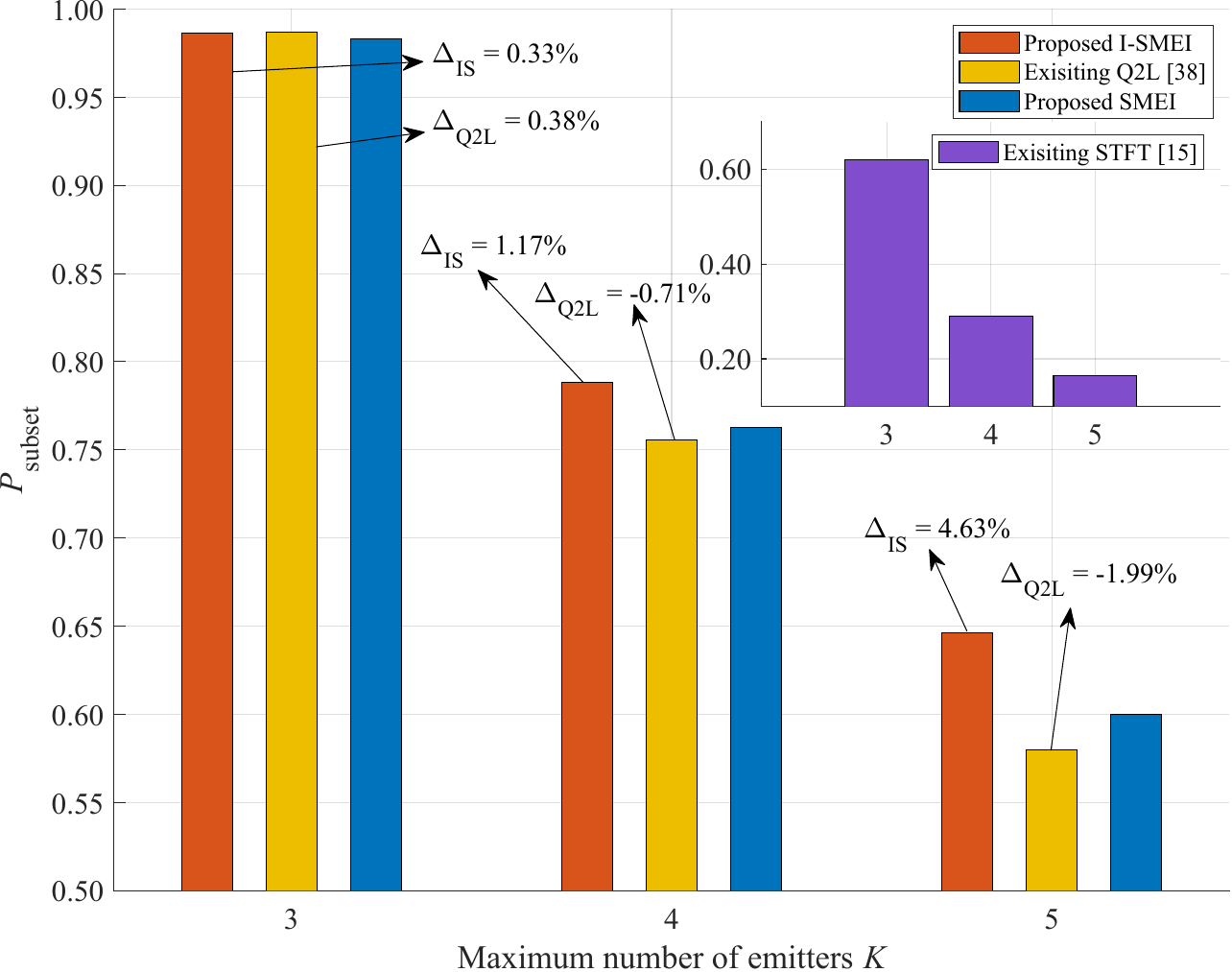}%
		\label{fig:subset_acc}
	}
	\hfill
	\subfloat[\(P_{\text{hamming}}\) at SNR = 12 dB]{%
		\includegraphics[width=0.42\textwidth]{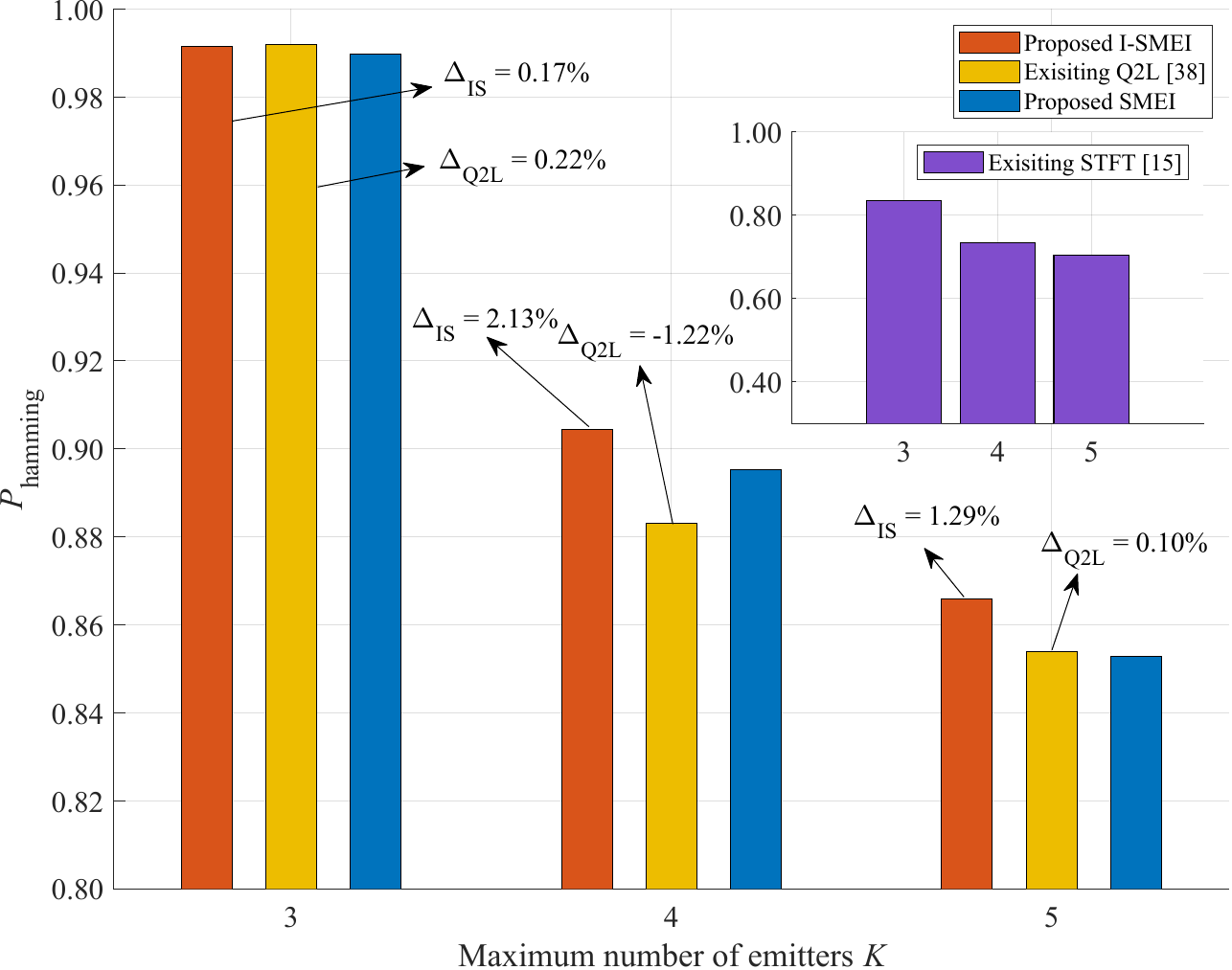}%
		\label{fig:hamming_acc}
	}
	
	\caption{The \(P_{\text{subset}}\) and \(P_{\text{hamming}}\) vs. maximum number of emitters \(K\) for different methods. The proposed I-SMEI shows significant gains in both metrics compared to the proposed SMEI across all \(K\) values, while the existing Q2L exhibits poor robustness and inconsistent gains, and the existing STFT method performs the worst across all \(K\) values.}
	\label{fig:accuracy_comparison}
\end{figure}

\begin{observation}
For the fundamental scheme, the proposed SMEI scheme demonstrates significant advantages in identificatioin accuracy over the existing STFT scheme \cite{Yu2024DroneRFa}. Meanwhile, for the enhanced scheme, the I-SMEI scheme also exhibits superior performance compared to the current Q2L scheme.  (cf. Figs.~\ref{fig:Hanming_subset_acc_AWGN_0_AND_50} and \ref{fig:acc_comparison_ratio100})
\end{observation}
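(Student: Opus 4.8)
The plan is to substantiate the two comparative claims separately through controlled ablation experiments conducted under the unified protocol of Table~\ref{tab:combined}, reporting both $P_{\text{subset}}$ and $P_{\text{hamming}}$ so that the evidence stands under both the strict and the lenient metric. Since this is an empirical observation, the ``proof'' is a demonstration that the claimed advantages are systematic rather than incidental to one favorable operating point.

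For the first claim (SMEI versus STFT), I would fix $K=3$ and sweep the SNR across the full range for the AWGN channel, overlaying the two schemes on a common axis as in Fig.~\ref{fig:Hanming_subset_acc_AWGN_0_AND_50}. The key is to show that the accuracy gap is not marginal but widens precisely where discrimination is hardest, namely at low SNR and at higher overlap ratios. I would attribute this mechanistically to the fact that the STFT imposes a fixed time-frequency tiling that blurs the subtle hardware-induced distortions (I/Q imbalance, spurious tones, and PA nonlinearity of Section~\ref{sec2}) whose fine structure is essential for separating overlapping fingerprints, whereas the residual-CNN extractor $\xi(\cdot)$ of Section~\ref{sec4.1} operates directly on the raw dual-channel signal and therefore preserves these cues.

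For the second claim (I-SMEI versus Q2L), I would hold the backbone identical and vary only the feature-enhancement module, comparing cross-sample message passing against the single Transformer decoder layer of Q2L. I would present the SNR sweep at the worst-case overlap ($100\%$) in Fig.~\ref{fig:acc_comparison_ratio100}, and, to rule out that the advantage is an artifact of one configuration, the per-$K$ bar charts of Fig.~\ref{fig:accuracy_comparison} at a fixed SNR. The decisive conceptual point is that Q2L's attention acts within a single sample, so it cannot exploit the inter-sample correlation guaranteed by \textit{Lemma}~\ref{lemma02} whenever two received signals share an emitter; by contrast, I-SMEI's mechanism operates across samples in the batch and thus directly harvests this shared-fingerprint information.

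The hard part will be establishing robustness rather than a single lucky point. In particular, I expect Q2L to remain competitive on the lenient $P_{\text{hamming}}$ metric, where isolated per-label successes accumulate, while falling behind on the strict $P_{\text{subset}}$ metric that demands every label be correct simultaneously; explaining this asymmetry, and demonstrating that the I-SMEI gain is consistent across channel models (AWGN and Rician with the $10$~dB factor of Table~\ref{tab:combined}) and across all tested $K$ rather than cherry-picked, is where the burden of evidence genuinely lies.
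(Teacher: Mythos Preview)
Your proposal is correct and follows essentially the same approach as the paper: the paper substantiates the observation by sweeping SNR at $K=3$ under AWGN for the SMEI/STFT comparison (emphasizing STFT's greater degradation as the overlap ratio rises) and by the $100\%$-overlap SNR sweep across channel environments for the I-SMEI/Q2L comparison, with the same asymmetry you anticipate---Q2L stays competitive on $P_{\text{hamming}}$ but not on $P_{\text{subset}}$. Your mechanistic explanations (fixed STFT tiling blurring hardware cues; within-sample versus cross-sample attention via \textit{Lemma}~\ref{lemma02}) and your invocation of the per-$K$ bar charts go slightly beyond the paper's own discussion but are consistent with it and only strengthen the case.
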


Fig.~\ref{fig:Hanming_subset_acc_AWGN_0_AND_50} compares the performance of the proposed SMEI scheme with the conventional STFT-based method in multi-emitter overlapping signal identification. The STFT method shows significant sensitivity to signal overlap ratios because it struggles to represent overlapping features. Specifically, when the overlap ratio increases from 0\% to 50\%, the STFT method shows a notable performance degradation. In contrast, the SMEI scheme demonstrates more stable performance characteristics, with its subset accuracy fluctuation being 3.43\% lower than that of the STFT method at an SNR of 6 dB. Furthermore, Fig.~\ref{fig:acc_comparison_ratio100} presents the comparison results between the I-SMEI scheme and the Q2L method in terms of identification accuracy. It can be observed that although both enhancement schemes achieve accuracies, I-SMEI maintains a more stable advantage in overall performance. These results indicate that the proposed SMEI and I-SMEI schemes offerofferpossess systematic advantages over baseline methods for overlapping signal identification in distributed environments.

\begin{observation}
The proposed I-SMEI approach outperforms both the existing Q2L and the proposed SMEI methods under the overlap = 100\% case, whereas the existing STFT method performs much worse. (cf. Fig.~\ref{fig:accuracy_comparison})
\end{observation}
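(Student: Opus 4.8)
Since this observation concerns an empirical performance ordering of four schemes under the most adversarial regime, the plan is to validate it through controlled ablation under the unified protocol of Section~\ref{sec6}, while grounding the expected ordering in \textit{Lemma}~\ref{lemma02}. The overlap = 100\% case is precisely the regime in which all $K$ emitters share an identical band, so any two active-set label vectors with a non-empty intersection induce maximally correlated received signals; by \textit{Lemma}~\ref{lemma02} the conditional mutual information $I(\bm{y}^{(i)}; \bm{y}^{(j)} \mid \bm{\lambda}^{(i)}, \bm{\lambda}^{(j)})$ is strictly positive and, intuitively, largest here. This supplies the theoretical rationale for why a scheme that explicitly exploits inter-emitter correlation should dominate in this setting.

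First I would argue the I-SMEI $>$ SMEI gap. The proposed SMEI uses $K$ independent binary heads and never models cross-emitter dependence, whereas I-SMEI inserts the message-passing module of \eqref{eq:updated_features}--\eqref{eq:final_representation} that aggregates correlated fingerprint features across samples. Because \textit{Lemma}~\ref{lemma02} guarantees the strongest correlations at full overlap, the feature-disentanglement benefit of message passing should be maximal here, so the empirical gap is expected to be largest in Fig.~\ref{fig:accuracy_comparison}. Next, for I-SMEI $>$ Q2L, I would exploit the structural distinction noted in Section~\ref{sec5}: Q2L applies standard within-sample multi-head self-attention via a Transformer decoder, whereas our attention in \eqref{eq:attention_scores} operates across samples, i.e.\ $\bm{A}_z[i,j]$ indexes pairs of overlapping signals in the batch. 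Only the cross-sample mechanism can access the batch-level statistical correlation structure (the matrix visualized in Fig.~\ref{fig:heatmap_corr}) that \textit{Lemma}~\ref{lemma02} identifies as informative; Q2L's within-sample attention cannot. Finally, the STFT failure follows because STFT-based identification relies on time-frequency separability, which vanishes when the bands coincide: at full overlap the spectra collapse onto one another, and overlapping fingerprints become indistinguishable in the time-frequency plane, explaining the worst-case behavior.

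Experimentally, I would fix a representative SNR (e.g.\ $12$ dB) and sweep $K$, reporting both $P_{\text{subset}}$ and $P_{\text{hamming}}$ aggregated over the test split, with the best model chosen by the validation metrics as in Section~\ref{sec6}. The ordering I-SMEI $\succ$ Q2L $\succ$ SMEI $\succ$ STFT, with I-SMEI uniformly highest and STFT uniformly lowest, must be shown to persist across every $K$ to substantiate the claim.

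The hard part will be rigorously separating I-SMEI from Q2L. This is the subtlest of the three comparisons because both are attention-based feature enhancements and the performance gap is narrowest; it must be shown to be a genuine, consistent effect of the cross-sample design rather than random variation across training runs, which calls for careful statistical aggregation and, ideally, repeated trials. A secondary difficulty is that \textit{Lemma}~\ref{lemma02} is only a qualitative positivity result: it motivates but does not quantitatively guarantee an accuracy gain, so the burden of establishing the precise ordering rests on the empirical evidence rather than on a tight theoretical implication.
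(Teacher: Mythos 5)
Your experimental design matches the paper's validation of this observation: identification accuracy at SNR = 12 dB in AWGN with overlap = 100\%, sweeping $K$, reporting both $P_{\text{subset}}$ and $P_{\text{hamming}}$ on the test split (Fig.~\ref{fig:accuracy_comparison}), with model selection as in Section~\ref{sec6}; your qualitative rationales (cross-sample versus within-sample attention for I-SMEI versus Q2L, loss of time-frequency separability for STFT at full overlap) also align with the paper's narrative. However, there is a genuine flaw in your success criterion: you require the total order I-SMEI $\succ$ Q2L $\succ$ SMEI $\succ$ STFT to persist across every $K$, but the intermediate link Q2L $\succ$ SMEI is not part of the stated observation and is in fact contradicted by the paper's own data. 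The paper reports that Q2L is \emph{unstable} relative to SMEI: it gains $0.38\%$ in subset accuracy at $K=3$ but falls to $-0.71\%$ and $-1.99\%$ at $K=4$ and $K=5$, whereas I-SMEI improves over SMEI by $0.33\%$, $1.17\%$, and $4.63\%$ at those same $K$. Under your criterion the experiment would be judged a failure at $K=4,5$, leading you to reject an observation that the data actually supports; the claim only asserts that I-SMEI dominates both Q2L and SMEI while STFT is far worse, and the paper explicitly uses Q2L's reversal against SMEI as evidence of I-SMEI's superior robustness rather than treating it as a defect to be explained away.

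A secondary mischaracterization follows from the same assumption: you flag I-SMEI versus Q2L as the narrowest comparison demanding careful statistical separation, but in the paper's data this margin \emph{widens} with $K$ (at $K=5$, Q2L sits below SMEI while I-SMEI sits well above it, so their separation in subset accuracy is roughly $6.6\%$ in relative terms); the genuinely tight comparisons are I-SMEI versus SMEI at small $K$ (a $0.33\%$ gap at $K=3$) and Q2L versus SMEI near its crossover. Your appeal to \textit{Lemma}~\ref{lemma02} as motivation is consistent with how the paper deploys it, and you are right that it is only a qualitative positivity result: the paper rests this observation entirely on the empirical evidence of Fig.~\ref{fig:accuracy_comparison}, with the same-backbone ablation (I-SMEI and SMEI differing only in the message-passing module) doing the attribution work you describe.
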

Fig.~\ref{fig:accuracy_comparison} presents the identification accuracy of different methods under complete overlap conditions in AWGN channels at SNR = 12 dB. It can be seen from Fig.~\ref{fig:accuracy_comparison} (a) that the STFT method performs the worst, with its accuracy decreasing significantly as the number of emitters increases. Here, a detailed numerical comparison reveals that I-SMEI consistently enhances the subset accuracy over SMEI, with relative improvements of 0.33\%, 1.17\%, and 4.63\% at \(K\) = 3, 4, and 5, respectively. In contrast, Q2L exhibits unstable performance relative to SMEI, achieving a 0.38\% gain at \(K\) = 3 but declining to -0.71\% and -1.99\% at \(K\) = 4 and 5, respectively. Consequently, while Q2L shows an advantage over SMEI at \(K\) = 3, this advantage gradually diminishes and eventually reverses as \(K\) increases. 

Fig.~\ref{fig:accuracy_comparison} (b) reveals a similar trend in Hamming accuracy. In comparison, I-SMEI outperforms SMEI on both metrics, and its advantage over SMEI persists even as the number of emitters increases. The comparison demonstrates that I-SMEI achieves better identification accuracy than both Q2L and SMEI, confirming its effectiveness in distributed networks with growing numbers of devices.
\section{Conclusion}
\label{sec7}
In this paper, we investigate the SMEI problem in distributed wireless networks and propose a low-complexity solution via multi-label learning. Conventional methods designed for single-emitter scenarios struggle with overlapping signals from concurrent transmissions by distributed emitters. To address this challenge, we reformulate SMEI as a multi-label learning problem and derive theoretical performance bounds 
via Fano's inequality.  Specifically, the main contributions are threefold. First, we establish theoretical performance bounds for SMEI via Fano's inequality, providing a rigorous benchmark for method evaluation. Second, we reformulate SMEI 
as a multi-label learning problem for the first time, treating each emitter as an independent label. This reduces the output dimensionality from $\mathcal{O}(2^K)$ to $\mathcal{O}(K)$, substantially reducing the number of model parameters and FLOPs. Third, we propose I-SMEI, which incorporates a message-passing module via multi-head attention to effectively capture emitter features in correlated signal combinations. Experimental results show that SMEI maintains comparable accuracy to multiclass classification while significantly reducing complexity. Furthermore, both I-SMEI and SMEI offer substantial advantages over existing STFT-based and Q2L schemes in terms of identification accuracy and robustness.

For future research, we will focus on exploring more expressive deep network architectures for distributed networks, with the goal of closing the gap with theoretical limits under low SNR and other practical impairments. Concurrently, to meet real-time identification demands in high-density distributed emission scenarios, we will develop more efficient algorithmic frameworks to optimize computational efficiency and resource utilization for devices with limited capabilities.

{\appendices \section{Proof of \textit{Lemma}~\ref{lemma01}} \label{appendix:fano_derivation}
This lemma derives theoretical upper bounds on the subset and Hamming accuracies via Fano's inequality. To begin the derivation, the label matrix is defined by
\begin{align}
	\bm{\varLambda} &= [\bm{\lambda}^{(1)}, \bm{\lambda}^{(2)}, \ldots, \bm{\lambda}^{(N)}],
\end{align}
where $\bm{\varLambda}$ contains $|\bm{\varLambda}| = 2^{K} - 1$ distinct label combinations, excluding the all-zero case, and the general form of Fano's inequality~\cite{Wang2016UserCapacity} is given as
\begin{equation}
	H\left(\boldsymbol{\lambda}\mid\boldsymbol{y}\right) \leq H\left(P_e\right) + P_e \log\left(|\bm{\varLambda}| - 1\right),
	\label{eq:Fano_eq_app}
\end{equation}
where $P_e$ denotes the classification error probability. Then, the subset accuracy is directly related to the error probability by
\begin{align}
	P_{\text{subset}} &= 1 - P_e, \label{eq:1_pe_pc_app} \\
	H\left(P_e\right) &= H\left(P_{\text{subset}}\right), \label{eq:he_pe}
\end{align}
where the binary entropy function is expressed as
\begin{equation}
	\begin{aligned}
		H\left(P_{\text{subset}}\right) = &-P_{\text{subset}} \log\left(P_{\text{subset}}\right) \\
		&- \left(1 - P_{\text{subset}}\right) \log\left(1 - P_{\text{subset}}\right).
	\end{aligned}
	\label{eq:hc_subset}
\end{equation}

By substituting~\eqref{eq:1_pe_pc_app} and~\eqref{eq:hc_subset} into~\eqref{eq:Fano_eq_app}, the inequality the inequality is given by
\begin{equation}
	\begin{aligned}
		& \log\left(|\bm{\varLambda}|-1\right) P_{\text{subset}} + P_{\text{subset}} \log\left(P_{\text{subset}}\right) \\
		& \quad + \left(1-P_{\text{subset}}\right) \log\left(1-P_{\text{subset}}\right) \\
		& \leq \log\left(|\bm{\varLambda}|-1\right) - H\left(\boldsymbol{\lambda} \mid \boldsymbol{y}\right),
		\label{eq:P_C_D_app}
	\end{aligned}
\end{equation}

To simplify the left-hand side of~\eqref{eq:P_C_D_app}, we define an auxiliary function as
\begin{equation}
	g(p) = \log\left(|\bm{\varLambda}|-1\right)p + p\log(p) + (1-p)\log(1-p),
	\label{eq:g_p_function}
\end{equation}
where $p \in (0,1)$. With this notation, the left-hand side of~\eqref{eq:P_C_D_app} can be expressed as $g(P_{\text{subset}})$. Subsequently, the derivative of $g(p)$ with respect to $p$ is written as
\begin{equation}
	g'(p) = \log\!\left(\frac{(|\bm{\varLambda}|-1)p}{1-p}\right),
	\label{eq:g_derivative}
\end{equation}
where the stationary condition is obtained by setting $g'(p)=0$, so we can obtain
\begin{equation}
	\frac{(|\bm{\varLambda}|-1)p}{1-p} = 1,
	\label{eq:stationary_condition}
\end{equation}
which yields
\begin{equation}
	p = \frac{1}{|\bm{\varLambda}|}.
	\label{eq:min_point}
\end{equation}

To verify that this stationary point is indeed a minimum, we examine the second derivative, which is given by
\begin{equation}
	g''(p) = \frac{1}{p} + \frac{1}{1-p} > 0,
	\label{eq:second_derivative}
\end{equation}
where the positivity holds for all $p \in (0,1)$, showing that $g(p)$ is strictly convex, and thus $p = 1/|\bm{\varLambda}|$ corresponds to its unique global minimum.

To proceed with deriving the upper bound on $P_{\text{subset}}$, we need to express the conditional entropy $H(\boldsymbol{\lambda}\mid\boldsymbol{y})$ in terms of quantities that can be bounded. According to the definition of mutual information, we have
\begin{equation}
	I\left(\boldsymbol{\lambda};\boldsymbol{y}\right) = H\left(\boldsymbol{\lambda}\right) - H\left(\boldsymbol{\lambda}\mid\boldsymbol{y}\right),
	\label{eq:I_XY_app}
\end{equation}
and substituting~\eqref{eq:I_XY_app} into~\eqref{eq:P_C_D_app} leads to
\begin{equation}
	g(P_{\text{subset}}) \leq \log\!\left(|\bm{\varLambda}| - 1\right) - H(\boldsymbol{\lambda}) + I(\boldsymbol{\lambda};\boldsymbol{y}),
	\label{eq:I_HXY_app}
\end{equation}
where the constant term $C$ is defined as
\begin{equation}
	C = \log(|\bm{\varLambda}| - 1) - H(\boldsymbol{\lambda}) + I(\boldsymbol{\lambda};\boldsymbol{y}).
	\label{eq:C_definition}
\end{equation}

Since $g(p)$ is convex and \( g(P_{\text{subset}}) \leq C \), the upper bound of $P_{\text{subset}}$ can be written as
\begin{equation}
	\widetilde{P}_{\text{subset}} = 
	\begin{cases}
		1/|\bm{\varLambda}|, & C \leq g(1/|\bm{\varLambda}|), \\
		\min(1, g^{-1}(C)), & C > g(1/|\bm{\varLambda}|).
	\end{cases}
	\label{eq:subset_upper_bound}
\end{equation}

To derive the upper bound of Hamming accuracy from the subset accuracy bound, we assume that the label vectors follow a uniform distribution, i.e.
\begin{equation}
	\mathbb{P}(\boldsymbol{\lambda} = \boldsymbol{\lambda}^{(i)}) = \frac{1}{|\bm{\varLambda}|}, 
	\label{eq:uniform_distribution}
\end{equation}
where $i \in \{1, 2, \ldots, |\bm{\varLambda}|\}$, and consequently, the entropy of $\boldsymbol{\lambda}$ is computed as
\begin{equation}
	H(\boldsymbol{\lambda}) = \log(|\bm{\varLambda}|).
	\label{eq:entropy_lambda}
\end{equation}

Furthermore, we assume that each label has the same prediction accuracy, i.e.
\begin{equation}
	\mathbb{P}(\hat{\lambda}_{k_1} = \lambda_{k_1}) = \mathbb{P}(\hat{\lambda}_{k_2} = \lambda_{k_2}),
	\label{eq:probability_equation_app}
\end{equation}
where $k_1, k_2 \in \{1, 2, \ldots, K\}$. Under this assumption, the probability of all $K$ labels being correct is given by
\begin{equation}
	P_{\text{subset}} = (P_{\text{hamming}})^K.
	\label{eq:subset_hamming_relation}
\end{equation}
where applying~\eqref{eq:subset_upper_bound} yields the theoretical upper bound for the Hamming accuracy, i.e.
\begin{equation}
	\widetilde{P}_{\text{hamming}} = (\widetilde{P}_{\text{subset}})^{1/K}.
	\label{eq:hamming_upper_bound}
\end{equation}

This completes proof of \textit{Lemma}~\ref{lemma01}.\hfill $\square$}

\vfill

\end{document}